\setlist[enumerate,1]{label=$\bullet$,leftmargin=5.5mm}
\def\diag{\textup{diag}}
\newtheorem{theorem}{Theorem}[section]
\newtheorem{problem}[theorem]{Problem}
\newtheorem{proposition}[theorem]{Proposition}
\newtheorem{corollary}[theorem]{Corollary}
\newtheorem{definition}[theorem]{Definition}
\newtheorem{example}[theorem]{Example}
\newtheorem{remark}[theorem]{Remark}
\newtheorem{assumption}[theorem]{Assumption}
\newcommand{\n}{\mathcal{N}}
\newcommand{\C}{\mathcal{C}}
\newcommand{\N}{\mathbb{N}}%
\newcommand{\R}{\mathbb{R}}%
\title{Compositional Synthesis of Signal Temporal Logic Tasks via Assume-Guarantee Contracts}
\author{Siyuan Liu$^{1,2}$, Adnane Saoud$^3$, Pushpak Jagtap$^4$, Dimos V. Dimarogonas$^5$, Majid Zamani$^{6,2}$% <-this % stops a space
%\thanks{S. Liu and M. Zamani were supported in part by the German Research Foundation (DFG) under grant ZA 873/7-1. D.V. Dimarogonas was supported in part by the ERC LEAFHOUND Project, the Swedish Research Council (VR), and the KAW Foundation. P. Jagtap was supported in part by WASP NTU.}% <-this % stops a space
\thanks{The work is supported in part by the German Research Foundation (DFG) under grant ZA 873/7-1, 
the H2020 ERC Starting Grant
AutoCPS (grant agreement No. 804639),
the ERC LEAFHOUND Project, the Swedish Research Council (VR), and the Wallenberg AI, Autonomous Systems and Software Program (WASP) funded by the Knut and Alice Wallenberg (KAW) Foundation.}
\thanks{$^1$Department of Electrical and Computer Engineering, Technical University of Munich, Germany.
$^2$Computer Science Department, LMU Munich, Germany. 
$^3$Laboratoire des Signaux et Syst\`emes, CentraleSup\'elec, Univ. Paris Saclay, Gif sur Yvette, France.
$^4$Robert Bosch Center for Cyber-Physical Systems in Indian Institute of Science, Bangalore, India.
$^5$Division of Decision and Control Systems, KTH Royal Institute of Technology, Stockholm, Sweden.
$^6$Computer Science Department, University of Colorado Boulder, USA.
% 
% M. Zamani is with the Computer Science Department, University of Colorado Boulder, USA. M. Zamani is also with the Computer Science Department, LMU Munich, Germany. 
% Emails: {\tt\small sy.liu@tum.de}, {\tt\small majid.zamani@colorado.edu}.
}
}
\def\endthebibliography{%
	\def\@noitemerr{\@latex@warning{Empty `thebibliography' environment}}%
	\endlist
}
\begin{document}

\maketitle

%%%%%%%%%%%%%%%%%%%%%%%%%%%%%%%%%%%%%%%%%%%%%%%%%%%%%%%%%%%%%%%%%%%%%%%%%%%%%%%%
\begin{abstract}
In this paper, we focus on the problem of compositional synthesis of controllers enforcing signal temporal logic (STL) tasks over a class of continuous-time nonlinear interconnected systems. 
By leveraging the idea of funnel-based control, we show that a fragment of STL specifications can be formulated as assume-guarantee contracts.  A new concept of contract satisfaction is then defined to establish our compositionality result, which allows us to guarantee the satisfaction of a global contract by the interconnected system when all subsystems satisfy their local contracts. Based on this compositional framework, we then design closed-form continuous-time feedback controllers to enforce local contracts over subsystems in a decentralized manner. Finally, we demonstrate the effectiveness of our results on two numerical examples.
\end{abstract}
%\begin{IEEEkeywords}
%Assume-Guarantee Contracts;
%\end{IEEEkeywords}

\vspace{-0.3cm}
\section{Introduction}
In the last few decades, the world has witnessed rapid progresses in the development and deployment of cyber-physical systems (CPSs)\cite{baheti2011cyber}. Typical examples of real-world CPSs include smart grids and multi-robot systems.
Nowadays, these systems are often large-scale interconnected  resulting from tight interactions between computational components and physical entities, subjecting to complex specifications that are difficult to handle using classical control design approaches. 

To address the emerging challenges in dealing with modern CPSs, various approaches  \cite{tabuada2009verification,belta2017formal,ames2019control} have been developed to formally verify or synthesize certifiable controllers against rich specifications given by temporal logic formulae, such as linear temporal logics (LTL).  
Despite considerable development and progress in this field, when encountering large-scale CPSs, existing methods suffer severely from the \emph{curse of dimensionality}, which limits their applications to systems of moderate size.
To tackle this complexity issue, different compositional approaches have been proposed for the analysis and control of interconnected systems. 
The two most commonly used approaches are based on input-output properties (e.g., small-gain or dissipativity properties) \cite{rungger2016compositional,7857702,kim2017small} and assume-guarantee contracts \cite{kim2017small,saoud2021assume,sharf2021assume,7857702,saoud2020contract,al2020controller,chen2020safety, ghasemi2020compositional,shali2021behavioural}.
Both types of compositional approaches allow one to tackle large-scale complex systems in a divide and conquer manner, which considers a system as an interconnection of smaller subsystems, and breaks down complex large design or verification problems into sub-problems of manageable sizes.
The compositional framework proposed in the present paper will fall into the second category by leveraging assume-guarantee contracts (AGCs).
Specifically, the notion of assume-guarantee contracts (AGCs) prescribes properties that a component must guarantee under assumptions on the behavior of its environment (or its neighboring subsystems)\cite{benveniste2018contracts}. 
%While AGCs have been extensively used in the computer science literature~\cite{benveniste2018contracts}, various formalization of AGCs were proposed just recently in the control community~\cite{kim2017small,chen2020safety,saoud2021assume,sharf2021assume,saoud2020contract,ghasemi2020compositional,al2020controller,shali2021behavioural}. The results in \cite{kim2017small,sharf2021assume,al2020controller,chen2020safety, ghasemi2020compositional} are tailored to discrete-time systems under different fragments of LTL specifications, while the results in \cite{saoud2021assume,saoud2020contract,shali2021behavioural} are dealing with continuous-time systems.

The main aim of this paper is to develop a compositional framework for the synthesis of controllers enforcing signal temporal logic (STL) formulae on continuous-time interconnected systems.
The control synthesis of STL properties for CPSs has attracted a lot of attentions in recent years.
Note that STL can be seen as an extension of LTL, which allows to formulate more expressive tasks with real-time and real-valued constraints \cite{maler2004monitoring}.
%Compared to LTL, which is a propositional temporal logic that deals only with discrete-time signals, STL \cite{maler2004monitoring} is a predicate logic interpreted over continuous-time signals that allows to formulate more expressive tasks with real-time and real-valued constraints.
Moreover,  
unlike LTL formulae that are equipped with Boolean semantics (in which signals either satisfy or violate a formula), 
STL ones entail space robustness \cite{donze2010robust} 
%(a special form of the robust semantics in \cite{fainekos2009robustness}) 
which enables one to assess the robustness of satisfaction.
Despite the many advantages of STL formulae, 
the design of control systems under STL specifications is known to be a challenging task. 
%due to its nonlinear, nonconvex, noncausal, and nonsmooth semantics. 
In \cite{raman2014model}, 
the problem of synthesizing STL tasks on discrete-time systems is handled using model predictive control (MPC) where space robustness is encoded as mixed-integer linear programs.
The results in \cite{larsCDC} established a connection between funnel-based control and the robust semantics of STL specifications, based on which a continuous feedback control law is derived for continuous-time systems. This work is then extended to handle coupled multi-agent systems by providing a least violating solution for conflicting STL specifications \cite{lindemann2019feedback}.
The results in \cite{lindemann2018control} proposed for the first time a synthesis approach for STL tasks by leveraging a notion of time-varying control barrier functions. 

%The results in \cite{lindemann2018control} proposed for the first time a synthesis approach for STL tasks by leveraging a notion of time-varying control barrier functions. 
%Other typical optimization-based approaches to this synthesis problem are summarized in \cite[Section 5]{belta2019formal}. 
%Motivated by the emerging challenges in designing large-scale systems under complex specifications, in this paper, we propose a compositional approach for the synthesis of STL tasks of continuous-time interconnected systems using AGC.

%We consider the interconnection of finitely many subsystems, whose dynamics are given by nonlinear control-affine systems. 

In this paper, we consider a fragment of STL specifications which is first formulated as funnel-based control problems.  By leveraging the derived funnels, we formalize the desired STL tasks as AGCs at the subsystem's level. 
A new concept of contract satisfaction, namely \emph{uniform strong satisfaction} (cf. Definition \ref{asg}), is introduced, which is critical for the compositional reasoning by making it possible to ensure the global satisfaction of STL tasks. 
Our main compositionality result is then presented using assume-guarantee reasoning, based on which the control of STL tasks can be conducted in a decentralized fashion. 
%In this way, one can design local controllers to enforce local contracts independently, while guaranteeing the satisfaction of the global STL for the interconnected system.
Finally, we derive continuous-time feedback controllers for subsystems in the spirit of funnel-based control, which ensures the satisfaction of local assume-guarantee contracts.
%assume-guarantee contracts is introduced which is tailored to 
%\textcolor{red}{Todo: Contribution List}
%The proposed decentralized framework allows us to design local controllers to enforce local contracts independently, while guaranteeing the satisfaction of the global STL for the interconnected system. 
To the best of our knowledge, this paper is the first to handle STL specifications on continuous-time systems using assume-guarantee contracts.
Thanks to the derived closed-form control strategy and the decentralized framework, 
our approach requires very low computational complexity compared to existing results in the literature which mostly rely on discretizations in state space or time. 

{\bf{Related work:}} 
% \textcolor{red}{Todo: brief comparison with previous works on AGC, especially Eric Kim's HSCC2017, B. Nuzzo's ACM TECS paper, and Adnane's own papers mentioned by one reviewer.}
% In the context of compositional reasoning using AGC, our work is mostly related to \cite{saoud2021assume,kim2017small} \textcolor{red}{and?}.
% A behavioural framework for the notions of AGC is firstly proposed in~\cite{saoud2021assume}, where a notion of \emph{strong contract satisfaction} has been used to provide a compositionality result under the condition of the set of guarantees being closed. However, this notion of contract is not applicable to describe the STL tasks considered here made of open sets of assumptions and guarantees. 
While AGCs have been extensively used in the computer science community~\cite{benveniste2018contracts,nuzzo2015compositional}, new frameworks of AGCs for dynamical systems with continuous state-variables have been proposed recently in~\cite{saoud2021assume,saoud2020contract} for continuous-time systems, and ~\cite{kim2017small},~\cite[Chapter 2]{saoud2019compositional} for discrete-time systems. In this paper, we follow the same behavioural framework of AGCs for continuous-time systems proposed in~\cite{saoud2021assume}. In the following, we provide a comparison with the approach proposed in~\cite{saoud2020contract,saoud2021assume}. A detailed comparison between the framework in~\cite{saoud2021assume}, the one in~\cite{kim2017small} and existing approaches from the computer science community~\cite{benveniste2018contracts,nuzzo2015compositional} can be found in~\cite[Section 1]{saoud2021assume}.

The contribution of the paper is twofold:
\begin{enumerate}[fullwidth,itemindent=0em]
    \item At the level of compositionality rules: The authors in~\cite{saoud2021assume} rely on a notion of \emph{strong contract satisfaction} to provide a compositionality result (i.e., how to go from the satisfaction of local contracts at the component's level to the satisfaction of the global specification for the interconnected system) under the condition of the set of guarantees (of the contracts) being closed. In this paper, we are dealing with STL specifications, which are encoded as AGCs made of open sets of assumptions and guarantees. The non-closedness of the set of guarantees makes the concept of contract satisfaction proposed in~\cite{saoud2021assume} not sufficient to establish a compositionality result. For this reason, in this paper, we introduce the concept of \emph{uniform strong contract satisfaction} and show how the proposed concept makes it possible to go from the local satisfaction of the contracts at the component's level to the satisfaction of the global STL specification at the interconnected system's level.
    \item At the level of controller synthesis: When the objective is to synthesize controllers to enforce the satisfaction of AGCs for continuous-time systems, to the best of our knowledge, existing approaches in the literature 
    %makes it possible to 
    can only deal with the particular class of invariance AGCs\footnote{where the set of assumptions and guarantees of the contract are described by invariants.} in \cite{saoud2020contract}, where the authors used symbolic control techniques to synthesize controllers. In this paper, we present a new approach to synthesize controllers for a more general class of AGCs, where the set of assumptions and guarantees are described by STL formulas, by leveraging tools in the spirit of funnel-based control.
\end{enumerate}

%This arXiv paper is an extended version of the paper submitted to IEEE Control Systems Letters.

% 1. CPS are complex integrations, complex specifications including STL are hard to control. Large-scale CPS. 
% Abstraction-based techniques were proposed. Limitation: curse of dimensionality.

% 2. Assume guarantee contracts is a promising tool to handle this issue. Literature of AGC. All the papers till Adnane's Auto. 

% 3. In this paper, we propose for the first time a framework to handle the compositional synthesis of STL tasks using assume-guarantee contracts. 
% Introduce STL (the powerfulness of STL compared to other LTL tasks) and related literature for the synthesis of STL.	

% 4. Methodology of this paper: 
% continuous-time AGC and closed-form continuous feedback control laws. abstraction free

%\subsection{Motivations}
%
%
%\subsection{Our Contributions}
%
%\subsection{Related Works}
%
%\subsection{Organization}
\vspace{-0.15cm}
\section{Preliminaries and Problem Formulation}\label{sec:pre}

{\bf{Notation:}} 
%We denote the set of real, positive real, nonnegative real, and positive integer numbers by $\mathbb{R}$, $\mathbb{R}^+$, $\mathbb{R}_{\geq 0} $, and $\mathbb{N}$, respectively. 
We denote by $\R$ and $\N$ the set of real and natural numbers, respectively. These symbols are annotated with subscripts to restrict them in the usual way, e.g., $\R_{>0}$ denotes the positive real numbers.
We denote by $\mathbb{R}^n$ an $n$-dimensional Euclidean space and by $\mathbb{R}^{n\times m}$ a space of real matrices with $n$ rows and $m$ columns. We denote by  $I_n$ the identity matrix of size $n$, and by $\mathbf{1}_n = [1,\dots,1]^\mathsf{T}$ the vector of all ones of size $n$. We denote by $\diag(a_1,\dots,a_n)$ the diagonal matrix with diagonal elements being $a_1,\dots,a_n$.

\subsection{Signal Temporal Logic (STL)} \label{STLsec}

Signal temporal logic (STL) is a predicate logic based on continuous-time signals, which consists of predicates $\mu$ that are obtained by evaluating a continuously differentiable predicate function $\mathcal{P}:\mathbb{R}^{n}\rightarrow \mathbb{R}$ as $\mu:=\begin{cases} \top & \text{if}\ \mathcal{P}(x) \geq 0\\ \perp & \text{if}\ \mathcal{P}(x) < 0, \end{cases}$ for $x \in \mathbb{R}^{n}$. %For instance, consider the predicate $\mu:=(x\geqslant1)$, which can be expressed by $h(x): = x-1$. 
The STL syntax is given by
\begin{equation*}
\phi::= \top \mid \mu \mid \neg\phi \mid \phi_{1}\wedge\phi_{2} \mid \phi_{1}\mathcal{U}_{[a, b]}\phi_{2},
\end{equation*}
where $\phi_{1}$, $\phi_{2}$ are STL formulae, and $\mathcal{U}_{[a, b]}$ denotes the temporal until-operator with time interval $[a, b]$, where $ a\leq b < \infty$. 
We use $(\mathbf{x}, t) \models\phi$ to denote that the state trajectory $\mathbf{x}: \mathbb{R}_{\geq 0}  \rightarrow X \subseteq \mathbb{R}^{n}$ satisfies $\phi$  at time $t$. 
The trajectory $\mathbf{x}: \mathbb{R}_{\geq 0}  \rightarrow X \subseteq \mathbb{R}^{n}$ satisfying formula $\phi$ is denoted by $(\mathbf{x}, 0) \models\phi$.
%, and $\phi$ is said to be satisfiable if $\exists \mathbf{x}: \mathbb{R}_{\geq 0}  \rightarrow X$ such that $\mathbf{x}$ satisfies $\phi$. \PJ{Are we using this term "satisfiable" anywhere?}
The semantics of STL \cite[Definition 1]{maler2004monitoring} can be recursively given by: $(\mathbf{x}, t)\models\mu$ if and only if $\mathcal{P}(\mathbf{x}(t))\geq 0, (\mathbf{x}, t)\models\neg\phi$ if and only if $\neg((\mathbf{x}, t)\models\phi), (\mathbf{x}, t)\models\phi_{1}\wedge\phi_{2}$ if and only if $(\mathbf{x}, t)\models \phi_{1}\wedge(\mathbf{x}, t)\models\phi_{2}$, and $(\mathbf{x}, t)\models\phi_{1}\mathcal{U}_{[a, b]}\phi_{2}$ if and only if $\exists t_{1}\in[t+a, t+b]$ s.t. $(\mathbf{x},\ t_{1})\models\phi_{2}\wedge\forall t_{2}\in[t, t_{1}]$, $(\mathbf{x}, t_{2})\models \phi_{1}$. Note that the disjunction-, eventually-, and always-operator can be derived as $\phi_{1}\vee\phi_{2}=\neg(\neg\phi_{1}\wedge\neg\phi_{2})$, $F_{[a, b]}\phi=\top\mathcal{U}_{[a, b]}{\phi}$, and $G_{[a, b]}\phi=\neg F_{[a, b]}{\neg\phi}$, respectively. 

Next, we introduce the robust semantics for STL (referred to as space robustness), which was originally presented in \cite[Definition 3]{donze2010robust}: 
$\rho^{\mu}(\mathbf{x}, t)$ := $\mathcal{P}(\mathbf{x}(t))$,
$\rho^{\neg\phi}(\mathbf{x}, t):  =-\rho^{\phi}(\mathbf{x}, t)$, 
$\rho^{\phi_{1}\wedge\phi_{2}}(\mathbf{x},  t): =\min(\rho^{\phi_{1}}(\mathbf{x},t), \rho^{\phi_{2}}(\mathbf{x}, t))$, 
$\rho^{F_{[a, b]}\phi}(\mathbf{x}, t): = \max_{t_{1}\in[t+a, t+b]}\rho^{\phi}(\mathbf{x}, t_{1})$,
$\rho^{G_{[a, b]}\phi}(\mathbf{x}, t):   = \min_{t_{1}\in[t+a, t+b]}\rho^{\phi}(\mathbf{x}, t_{1}).
$
Note that $(\mathbf{x}, t)\models\phi$ if $\rho^{\phi}(\mathbf{x}, t) > {0}$ holds \cite[Proposition 16]{fainekos2009robustness}. 
Space robustness determines how robustly a signal $\mathbf{x}$ satisfies the STL formula $\phi$. In particular, for two signals $\mathbf{x}_1,\mathbf{x}_2: \mathbb{R}_{\geq 0}  \rightarrow X$ satisfying a STL formula $\phi$ with $\rho^{\phi}(\mathbf{x}_2, t) > \rho^{\phi}(\mathbf{x}_2, t) > {0}$,  signal $\mathbf{x}_2$ is said to satisfy $\phi$ more robustly at time $t$ than $\mathbf{x}_1$ does.
%The definitions of $\rho^{\phi_{1}\vee\phi_{2}}(x, t)$ and $\rho^{\phi_{1}\mathcal{U}_{[a, b]}\phi_{2}}(x, t)$ are omitted due to space limitations. 
We abuse the notation as $\rho^{\phi}(\mathbf{x}(t)):=\rho^{\phi}(\mathbf{x}, t)$ if $t$ is not explicitly contained in $\rho^{\phi}(\mathbf{x}, t)$. For instance, $ \rho^{\mu}(\mathbf{x}(t)): =\rho^{\mu}(\mathbf{x}, t): =\mathcal{P}(\mathbf{x}(t))$  since $\mathcal{P}(x(t))$ does not contain $t$ as an explicit argument. However, $t$ is explicitly contained in $\rho^{\phi}(\mathbf{x}, t)$ if temporal operators (eventually, always, or until) are used. 
Similarly as in \cite{aksaray2016q}, throughout the paper, the non-smooth conjunction is approximated by smooth functions as $\rho^{\phi_{1}\wedge\phi_{2}}(\mathbf{x}, t)\approx-\ln(\exp(-\rho^{\phi_{1}}(\mathbf{x}, t))+\exp(-\rho^{\phi_{2}}(\mathbf{x}, t)))$.

In the remainer of the paper, we will focus on a fragment of STL introduced above. Consider 
\begin{align}\label{syntax1}
\psi &::= \top \mid   \mu \mid \neg \mu \mid \psi_1 \wedge \psi_2, \\ \label{syntax2}
\phi &::= G_{[a,b]}\psi \mid F_{[a,b]} \psi  \mid F_{[\underline{a},\underline{b}]}G_{[\bar{a}, \bar{b}]}\psi,
%\mid F_{[a,b]}G_{[a,b]}\psi
\end{align}
%\PJ{why $FG\psi$ is omitted?}
where $\mu$ is the predicate, $\psi$ in \eqref{syntax2} and $\psi_1, \psi_2$ in \eqref{syntax1} are formulae of class $\psi$ given in \eqref{syntax1}. We refer to $\psi$ given in \eqref{syntax1} as non-temporal formulae, i.e., boolean formulae, while $\phi$ is referred to as (atomic) temporal formulae due to the use of always- and eventually-operators.

Note that this STL fragment allows us to encode concave temporal tasks, which is a necessary assumption used later for the design of closed-form, continuous feedback controllers (cf. Assumption \ref{assmprho1}). It should be mentioned that by leveraging the results in e.g., \cite{lindemann2020efficient}, it is possible to  expand our results to full STL semantics.  
%  \PJ{I think we should move this assumption to STL section?}

\subsection{Interconnected control systems}

In this paper, we study the interconnection of finitely many continuous-time control subsystems. 
%Here, let us first introduce some notations from graph theory for the definition of interconnected systems. 
Consider a network consisting of $N \in \N$ control subsystems $\Sigma_i$, $i \in I = \{1,\dots,N\}$. 
%We denote the interconnection topology of the network by the directed graph $\G = (I, \E)$, where $I = \{1,\dots,N\}$ is the set of vertices with each vertex $i \in I$ labeled with subsystem $\Sigma_i$, and $\E \subseteq I \times I$ is the set of ordered pairs $(i,j)$, $i,j \in I$, which represents a binary connectivity relation in the directed graph, 
For each $i \in I$, the set of \emph{in-neighbors} of $\Sigma_i$ is denoted by $\n_i \subseteq I \setminus \{i\}$, i.e., the set of subsystems $\Sigma_j$, $j \in \n_i$, directly influencing subsystem $\Sigma_i$.

A continuous-time control subsystem is formalized in the following definition.

\begin{definition} (Continuous-time control subsystem)
	\label{subsystem}
A continuous-time control subsystem $\Sigma_i$ is a tuple $\Sigma_i = (X_i,U_i,W_i,f_i,g_i,h_i)$, where 
\begin{enumerate} 
\item $X_i = \mathbb{R}^{n_i}$, $U_i =\mathbb{R}^{m_i}$ and $W_i = \mathbb{R}^{p_i}$ are the state, external input, and internal input spaces, respectively;
%\item $f : X \times U \times W \rightarrow X$ is a continuous map satisfying the following Lipschitz assumption: for every $x \in X$, there exists a neighborhood $\mathcal{D}$ of $x$ such that $\Vert f(x_1,u_1,w_1) - f(x_2,u_2,w_2)\Vert \leq L \Vert x_1 - x_2\Vert$  for all $x_1,x_2 \in \mathcal{D}$, all $u \in U$, and all $w_1, w_2 \in W$; (locally Lipschitz)
\item $f_i:\mathbb{R}^{n_i} \rightarrow \mathbb{R}^{n_i}$ is the flow drift, $g_i:\mathbb{R}^{n_i} \rightarrow \mathbb{R}^{n_i\times m_i}$ is the external input matrix, and $h_i: \mathbb{R}^{p_i} \rightarrow  \mathbb{R}^{n_i}$ is the internal input map. 
\end{enumerate} 
A trajectory of $\Sigma_i$ is an absolutely continuous map $(\mathbf{x}_i,\mathbf{u}_i,\mathbf{w}_i)\!:\!\mathbb{R}_{\geq 0} \!\rightarrow\! X_i \!\times\! U_i \!\times\!W_i$ such that 
%there exists an external input trajectory $\mathbf{u}_i\!:\! \mathbb{R}_{\geq 0} \!\rightarrow\! U_i$ such that 
for all $t \!\geq\! 0$ \vspace{-0.2cm}
\begin{equation}
\label{eqn:subsys}
    \mathbf{\dot x}_i(t)  = f_i(\mathbf{x}_i(t))+ g_i(\mathbf{x}_i(t))\mathbf{u}_i(t)+h_i(\mathbf{w}_i(t)),
\end{equation}
where $\mathbf{u}_i\!:\! \mathbb{R}_{\geq 0} \!\rightarrow\! U_i$ is the external input trajectory, and $\mathbf{w}_i :\!\mathbb{R}_{\geq 0} \rightarrow W_i$ is the internal input trajectory. 
\end{definition}

In the above definition, $w \!\in\! W$ are termed as ``internal" inputs describing the interaction between subsystems and $u \!\in\! U$ are ``external" inputs served as interfaces for controllers.

An interconnected control system is defined as follows.

\begin{definition}\label{intersys} (Continuous-time interconnected control system)
Consider $N \in \N$ control subsystems $\Sigma_i$ as in Definition \ref{subsystem}. An interconnected control system denoted by $\mathcal{I}(\Sigma_1,\dots,\Sigma_N)$ %and equipped with a directed graph $\G = (I, \E)$, 
is a tuple $\Sigma = (X,U,f,g)$ where 
\begin{enumerate}
\item $X = \prod_{i\in I} X_i$ and $U = \prod_{i\in I} U_i$ are the state and external input spaces, respectively;
%\item $f(x,u) \Let [f_1(x_1,u_1,w_1);\dots;f_N(x_N,u_N,w_N)]$ where $x = [x_1;\dots;x_N]$, and $u =[u_1;\dots;u_N]$, with internal inputs constrained by $w_i = [x_{j_1},\dots,x_{j_p}]$, where $\n_i = \{j_1,\dots,j_p\}$, for all $i \in I$;
\item $f:\mathbb{R}^{n} \rightarrow \mathbb{R}^{n}$ is the flow drift and $g:\mathbb{R}^{n} \rightarrow \mathbb{R}^{n\times m}$ is the external input matrix defined as : 
$f(\mathbf{x}(t)) =  [f_1(\mathbf{x}_1(t))+h_1({\mathbf{w}_1(t)});
      \dots; f_N(\mathbf{x}_N(t))+h_N({\mathbf{w}_N(t)})]$, $
    g(\mathbf{x}(t)) =  \diag(g_1(\mathbf{x}_1(t)),$  $\dots, g_N(\mathbf{x}_N(t))),$ 
where $n=\sum_{i\in I}n_i$, $m=\sum_{i\in I}m_i$, $\mathbf{x} = [\mathbf{x}_1;\dots;\mathbf{x}_N]$,  $\mathbf{w}_i(t) =[\mathbf{x}_{j_1}(t);\dots;\mathbf{x}_{j_{|\n_i|}}(t)]$, for all $i \in I$.
\end{enumerate}
A trajectory of $\Sigma$ is an absolutely continuous map
$(\mathbf{x},\mathbf{u})\!:\!\mathbb{R}_{\geq 0} \!\rightarrow\! X \!\times\! U$
such that for all $t \geq 0$ \vspace{-0.15cm}
\begin{equation}
\label{eqn:sys}
    \mathbf{\dot x}(t)  = f(\mathbf{x}(t))+ g(\mathbf{x}(t))\mathbf{u}(t),
\end{equation}
where $\mathbf u: \mathbb{R}_{\geq 0} \rightarrow U$ is the external input trajectory.
\end{definition}

Note that in the above definition, the interconnection structure implies that all the internal inputs of a subsystem are states of its neighboring subsystems. Therefore, the definition of an interconnected control system boils down to the tuple $\Sigma = (X,U,f,g)$ since it has trivial null internal inputs.

We have now all the ingredients to provide a formal statement of the problem considered in the paper:

\begin{problem}
\label{pro:1}
Given an interconnected system $\Sigma\!=\!(X,$ $U,f,g)$, consisting of subsystems $\Sigma_i=(X_i,U_i,W_i,$ $f_i,g_i,h_i)$, $i \in \{1,\ldots,N\}$, and given an STL specification 
$\phi$ as in \eqref{syntax1}--\eqref{syntax2}, where $\phi=\land_{i=1}^N\phi_i$ and $\phi_i$ is the local STL task assigned to subsystem $\Sigma_i$, synthesize local controllers $\mathbf{u}_i:X_i\times \mathbb{R}_{\geq 0} \rightarrow U_i$ for subsystems $\Sigma_i$ such that $\Sigma$ satisfies the specification $\phi$. 
%\PJ{I guess we are not providing solution for any STL...we can only deal with a fragment of STL, right?}
\end{problem}

In the remainder of the paper, to provide a solution to Problem \ref{pro:1},  
the desired STL tasks will be first casted as funnel functions in Section \ref{subsec: ppc}. Then, we present our main compositionality result based on a notion of assume-guarantee contracts as in Section \ref{subsec:agc}, which allows us to tackle the synthesis problem in a decentralized fashion.
We will further explain in Section \ref{subsec:stl_to_agc} on how to assign assume-guarantee contracts tailored to the funnel-based formulation of STL tasks.
A closed-form continuous-time control law will be derived in Section \ref{sec:localcontroller} to enforce local contracts over subsystems individually.  

\vspace{-0.15cm}
  
%\textcolor{blue}{To be discussed, In this paper, we assume that the control system $\Sigma$ is forward complete (maybe also define what do we mean by forward complete) necessary and sufficient conditions for forward completeness *** (Angeli and Sontag 1999).}

\section{Assume-Guarantee Contracts and Compositional Reasoning}\label{sec: agc}
In this section, we present a compositional approach based on a notion of assume-guarantee contracts, which enables us to reason about the properties of a continuous-time interconnected system based on the properties of its components.
Before introducing the compositionality result, in the next subsection, we first show how to cast STL formulae into  
time-varying funnel functions which will be leveraged later to design continuous-time AGCs. 
%In this subsection, we show how to recast the problem of synthesizing the fragment of STL as in \eqref{syntax1}-\eqref{syntax2} into a funnel-based control framework. 
Note that the idea of casting STL as funnel functions was originally proposed in \cite{larsCDC}.  

%prescribed performance control (PPC) framework. 
%In this way, it will allow us to formulate STL formulae as time-varying assume-guarantee contracts.
%This will also enable us to design a closed-form feedback control law based on a funnel-based control strategy to enforce local STL tasks on subsystems.
\vspace{-0.15cm}
\subsection{Casting STL as funnel functions}\label{subsec: ppc}

First, let us define a funnel function
$\gamma_i(t) =  (\gamma_i^0 - \gamma_i^{\infty})\exp(-l_it) + \gamma_i^{\infty}$,   
where $l_i,t\in\R_{\geq 0}$, $\gamma_i^0,\gamma_i^{\infty} \in \R_{> 0}$ with $\gamma_i^0\geq \gamma_i^{\infty} $.
Consider the robust semantics of STL introduced in Subsection \ref{STLsec}. For each subsystem with STL specification $\phi_i$ in \eqref{syntax1} with the corresponding $\psi_i$, we can achieve  $0 < r_i \leq \rho_i^{\phi_i}(\mathbf{x}_i,0) \leq \rho_{i}^{max}$ by prescribing a temporal behavior to $\rho_i^{\psi_i}(\mathbf{x}_i(t))$ through a properly designed function $\gamma_i$ and parameter $\rho_{i}^{max}$, and the funnel \vspace{-0.15cm}
\begin{align} \notag
&-\gamma_i(t) + \rho_{i}^{max}  <  \rho_i^{\psi_i}(\mathbf{x}_i(t))  <  \rho_{i}^{max} \\     \label{goalineq}
& \iff  - \gamma_i(t) < \rho_i^{\psi_i}(\mathbf{x}_i(t)) -  \rho_{i}^{max}  < 0.
\end{align}
Note that functions $\gamma_i : \mathbb{R}_{\geq 0}  \rightarrow \mathbb{R}_{> 0}$, $i\in \{1,\ldots,N\}$, are positive, continuously differentiable, bounded, and non-increasing. 
% By properly choosing performance functions $\gamma_i$, which prescribes temporal behavior, and combining with $\rho_i^\psi(\mathbf{x}_i(t))$, 
The design of $\gamma_i$ and $\rho_{i}^{max}$ that leads to the satisfaction of $0 < r_i \leq \rho_i^{\phi_i}(\mathbf{x}_i,0) \leq \rho_{i}^{max}$  through \eqref{goalineq} will be discussed in Section \ref{Subsec:localcontroller}. %\PJ{do we need $r_i$?} 
%Furthermore, we make the following two assumptions on the functions $\rho_i^{\psi_i}(\mathbf{x}_i(t))$ for non-temporal formulae $\psi_i$ contained in \eqref{syntax1}-\eqref{syntax2}.
%\PJ{which assumptions you are talking about?}

To better illustrate the satisfaction of STL tasks using funnel-based strategy, we provide the next  example with more intuitions. 

\begin{example}

\begin{figure}[t!]
	\centering
	%\quad
	\subcaptionbox{Funnel $(-\gamma_1(t) + \rho_{1}^{max},  \rho_{1}^{max})$ (dashed lines) for $\phi_1:= F_{[0,8]}\psi_1$, s.t. $\rho_1^{\phi_1}(\mathbf{x},0) \geq r_1$ with $r_1 = 0.1$ (dotted line). \label{fig:funnel_f}}
	{\includegraphics[width=.45\textwidth]{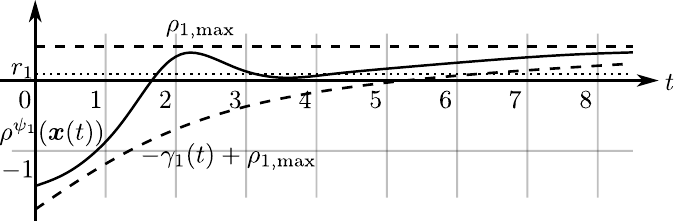}}
	\subcaptionbox{Funnel $(-\gamma_2(t) + \rho_{2}^{max}, \rho_{2}^{max})$  (dashed lines) for $\phi_2 := G_{[0,8]}\psi_2$, s.t. $ \rho_2^{\phi_2}(\mathbf{x},0) \geq r_2$ with $r_2 = 0.1$ (dotted line). \label{fig:funnel_g}}
	{\includegraphics[width=.45\textwidth]{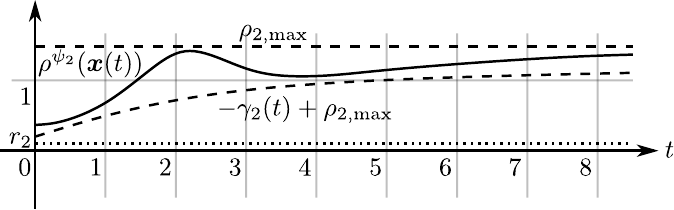}}
	%\quad
	\caption{Funnels for STL formulae. } 
	%Performance bounds are indicated by dashed lines. Evolution of $\rho^{\psi_{i}}(\boldsymbol{x}_i(t))$ are depicted using solid lines.}
	\label{fig:funnel_example}
	\vspace{-0.7cm}
\end{figure}
Consider STL formulae $\phi_1 := F_{[0,8]}\psi_1$ and $\phi_2 := G_{[0,8]}\psi_2$ with $\psi_1 = \mu_1$ and $\psi_2 = \mu_2$, where $\mu_1$ and $\mu_2$ are associated with predicate functions $\mathcal{P}_1(\mathbf{x}) = \mathcal{P}_2(\mathbf{x}) = \mathbf{x}$. 
Figs.~\ref{fig:funnel_f} and \ref{fig:funnel_g} show the funnel in \eqref{goalineq} prescribing a desired temporal behavior to satisfy $\phi_1$ and $\phi_2$, respectively. 
Specifically, it can be seen that
$\rho_1^{\psi_1}(\mathbf{x}(t)) \in (-\gamma_1(t) + \rho_{1}^{max},  \rho_{1}^{max})$ and $\rho_2^{\psi_1}(\mathbf{x}(t)) \in (-\gamma_2(t) + \rho_{2}^{max}, \rho_{2}^{max})$ for all $t \in \mathbb{R}_{\geq 0}$ as in Fig.~\ref{fig:funnel_example}. This shows that \eqref{goalineq} is satisfied for all $t \in \mathbb{R}_{\geq 0}$. 
Then, the connection between atomic formulae $\rho_i^{\psi_{i}}(\mathbf{x}(t))$ and temporal formulae $\rho_i^{\phi_i}(\mathbf{x},0)$, is made by the choice of  $\gamma_1$, $\gamma_2$, $\rho_{1}^{max}$, and $\rho_{2}^{max}$. For example,  the lower funnel $-\gamma_1(t) + \rho_{1}^{max}$ in Fig.~\ref{fig:funnel_f} ensures that $\rho_1^{\psi_1}(\mathbf{x}(t))  \geq r_1 = 0.1$ for all $t \geq 6$, which guarantees that the STL task $\phi_1$ is robustly satisfied by  $\rho_1^{\phi_1}(\mathbf{x},0) \geq r_1$.  
% $-\gamma_1(t) + \rho_{1}^{max} < \rho_1^{\psi_1}(\mathbf{x}(t)) < \rho_{1}^{max} $ and $-\gamma_2(t) + \rho_{2}^{max} < \rho_2^{\psi_1}(\mathbf{x}(t)) < \rho_{2}^{max}$.
\end{example}

% \begin{example}
% Consider STL formulae $\phi_1 := F_{[0,8]}\psi_1$ and $\phi_2 := G_{[0,8]}\psi_2$ with $\psi_1 = \mu_1$ and $\psi_2 = \mu_2$, where $\mu_1$ and $\mu_2$ are associated with predicate functions $\mathcal{P}_1(\mathbf{x}) = \mathcal{P}_2(\mathbf{x}) = \mathbf{x}$. 
% Figs.~\ref{fig:funnel_f} and \ref{fig:funnel_g} show the funnel in \eqref{goalineq} prescribing a desired temporal behavior to satisfy $\phi_1$ and $\phi_2$, respectively. 
% Specifically, it can be seen that
% $\rho_1^{\psi_1}(\mathbf{x}(t)) \in (-\gamma_1(t) + \rho_{1}^{max},  \rho_{1}^{max})$ and $\rho_2^{\psi_1}(\mathbf{x}(t)) \in (-\gamma_2(t) + \rho_{2}^{max}, \rho_{2}^{max})$ for all $t \in \mathbb{R}_{\geq 0}$ as in Fig.~\ref{fig:funnel_example}. This shows that \eqref{goalineq} is satisfied for all $t \in \mathbb{R}_{\geq 0}$. 
% Then, the connection between atomic formulae $\rho_i^{\psi_{i}}(\mathbf{x}(t))$ and temporal formulae $\rho_i^{\phi_i}(\mathbf{x},0)$, is made by the choice of  $\gamma_1$, $\gamma_2$, $\rho_{1}^{max}$, and $\rho_{2}^{max}$. For example,  the lower funnel $-\gamma_1(t) + \rho_{1}^{max}$ in Fig.~\ref{fig:funnel_f} ensures that $\rho_1^{\psi_1}(\mathbf{x}(t))  \geq r_1 = 0.1$ for all $t \geq 6$, which guarantees that the STL task $\phi_1$ is robustly satisfied by  $\rho_1^{\phi_1}(\mathbf{x},0) \geq r_1$.  
% % $-\gamma_1(t) + \rho_{1}^{max} < \rho_1^{\psi_1}(\mathbf{x}(t)) < \rho_{1}^{max} $ and $-\gamma_2(t) + \rho_{2}^{max} < \rho_2^{\psi_1}(\mathbf{x}(t)) < \rho_{2}^{max}$.
% \end{example}

%\textcolor{blue}{Some examples would be helpful}

In the sequel, STL tasks will be formulated as contracts by leveraging the above-presented funnel-based framework. We will then design local controllers enforcing the local contracts over the subsystems (cf. Section \ref{sec:localcontroller}, Theorem \ref{theorem}).

\subsection{Compositional reasoning via assume-guarantee contracts}\label{subsec:agc}

%\textcolor{blue}{If we are not going to distinguish the case of weak and strong, it would be better to remove the interconnection topology to simplify.}

In this subsection, we introduce a notion of continuous-time assume-guarantee contracts to establish our compositional framework. A new concept of contract satisfaction is defined which is tailored to the funnel-based formulation of STL specifications as discussed in Subsection \ref{subsec: ppc}.

%Now, let us provide the definition of assume-guarantee contracts and a new concept of contract satisfaction for continuous-time systems. 
%
%\begin{figure}[b]
%	\centering
%	\begin{subfigure}[b]{0.45\textwidth}
%		\centering
%	\includegraphics[width=\textwidth]{}	
%		\caption{}
%		%\vspace{-0.2cm}
%		\label{fig:traj}
%	\end{subfigure}
%\hfill
%	\begin{subfigure}[b]{0.45\textwidth}
%		\centering
%		\includegraphics[width=\textwidth]{}
%		\caption{}
%		%\vspace{-0.2cm}
%		\label{fig:barrier}
%	\end{subfigure}
%\caption{Weakly satisfaction $\Sigma \models \mathcal{C}$ v.s. strongly satisfaction $\Sigma \models_{us} \mathcal{C}$}
%\end{figure}

\begin{definition} (Assume-guarantee contracts) \label{asg}
	Consider a subsystem
	$\Sigma_i \!=\! (X_i,U_i,W_i,f_i,g_i,h_i)$.  An assume-guarantee contract for $\Sigma_i$ is a tuple $\C_i \!=\! (A_i, G_i)$ where
	\begin{enumerate}
	\item $A_i: \mathbb{R}_{\geq 0}  \rightarrow  W_i$ is a set of assumptions on the internal input trajectories;
	\item $G_i: \mathbb{R}_{\geq 0}  \rightarrow  X_i$ is a set of guarantees on the state trajectories.
	\end{enumerate}
	
We say that $\Sigma_i$ (\emph{weakly}) satisfies $\mathcal{C}_i$, denoted by $\Sigma_i \models \mathcal{C}_i$, if for any trajectory  $(\mathbf{x}_i,\mathbf{u}_i,\mathbf{w}_i)\! : \! \mathbb{R}_{\geq 0}\! \rightarrow\! X_i \!\times\! U_i \!\times\! W_i$ of $\Sigma_i$, the following holds:
for all $t \!\in\! \mathbb{R}_{\geq 0} $ such that  $\mathbf{w}_i(s) \!\in\! A_i(s)$ for all $s \!\in\! [0,t]$, we have $\mathbf{x}_i(s) \!\in\! G_i(s)$ for all $s \!\in\! [0,t]$.

We say that $\Sigma_i$ \emph{uniformly strongly} satisfies $\mathcal{C}_i$, denoted by $\Sigma_i \models_{us} \mathcal{C}_i$, if for any trajectory $(\mathbf{x}_i,\mathbf{u}_i,\mathbf{w}_i) \!:\! \mathbb{R}_{\geq 0}  \!\rightarrow\! X_i \!\times\! U_i \!\times\! W_i$ of $\Sigma_i$, the following holds: 
there exists $\delta_i>0$ such that for all $t \in \mathbb{R}_{\geq 0} $ and for all $s \in [0,t]$ where $\mathbf{w}_i(s) \in A_i(s)$, we have $\mathbf{x}_i(s) \in G_i(s)$ for all $s \in [0,t+\delta_i]$.
\end{definition}

Note that $\Sigma_i \models_{us} \mathcal{C}_i$ obviously implies $\Sigma_i \models \mathcal{C}_i$. 

\begin{remark}
It should be mentioned that interconnected systems have no assumptions on internal inputs since they have a trivial null internal input set as in Definition \ref{intersys}. 
Hence, an AGC for an interconnected system $\Sigma = \mathcal{I}(\Sigma_1,\dots,\Sigma_N)$ will be denoted by $\mathcal{C}=(\emptyset, G)$.  
The concepts of contract satisfaction by $\Sigma$ are similar as in the above definition by removing the conditions on internal inputs.
\end{remark}

% Also note that  interconnected systems have trivial null assumptions on internal inputs. Hence, a contract for an interconnected system $\mathcal{I}(\Sigma_1,\dots,\Sigma_N)$ will be denoted by $\mathcal{C}=(\emptyset, G)$.  

% \textcolor{blue}{
% Provide an example of an assume guarantee contract in terms of funnel as in (\ref{goalineq}) to motivate the next remark.}

% As it can been seen in Example \ref{eg: asg}, the assumptions and guarantees of a contract
% are the funnels for STL formulae, which can be considered as time-varying safe sets (\textcolor{red}{to discuss}) (see e.g.~\cite{8796018}). 
%The reason of our choice follows from the fact that a large fragment of STL can be encoded into time-varying safe sets. 
%Indeed, in the previous section, we have shown how the considered fragment of STL can be written in terms of funnels, which are considered as time-varying safe sets. 

We are now ready to state the main result of this section providing conditions under which one can go from the satisfaction of local contracts at the subsystem's level to the satisfaction of a global contract for the interconnected system.

\begin{theorem}
\label{thm:main}
Consider an  interconnected control system  $\Sigma = \mathcal{I}(\Sigma_1,\dots,\Sigma_N)$ as in Definition \ref{intersys}.
%equipped with a directed graph $\G = (I, \E)$. 
To each subsystem $\Sigma_i$, $i \in I$, we associate a contract $\C_i = (A_{i}, G_{i})$ and let $\C  = (\emptyset, G)= (\emptyset,\prod_{i\in I} G_{i})$ be the corresponding contract for $\Sigma$. Assume the following conditions hold:
 \begin{enumerate}[leftmargin = 2em] 
 \item[(i)] For all $i \in I$ and for any trajectory $(\mathbf{x}_i,\mathbf{u}_i,\mathbf{w}_i):\mathbb{R}_{\geq 0} \rightarrow X_i\times U_i \times W_i$ of $\Sigma_i$,   $\mathbf{x}_i(0) \in G_i(0)$;
 \item[(ii)] for all $i \in I$, $\Sigma_i \models_{us} \C_i$;
 \item[(iii)] for all $i \in I$, $\prod_{j\in \n_i} G_{i} \subseteq A_{i}$.
% \item if for all $t \geq 0 $, $\mathbf{x}(t) \in G(t)$, then $\Sigma$ satisfies the signal temporal logic task $\phi$;
 \end{enumerate}
 Then, $\Sigma \models \mathcal{C}$.
\end{theorem}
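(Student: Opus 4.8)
The plan is to argue by contradiction, exploiting the ``uniform strong'' slack $\delta_i$ to rule out the possibility that any subsystem is the first to leave its guarantee set. Concretely, suppose for contradiction that $\Sigma \not\models \mathcal{C}$. Since $\mathcal{C} = (\emptyset, \prod_{i\in I} G_i)$ has no assumptions, this means there is a trajectory $(\mathbf{x},\mathbf{u})$ of $\Sigma$ and some time at which $\mathbf{x}(s) \notin G(s) = \prod_{i\in I} G_i(s)$. By condition (i) and the fact that all the $G_i$ are (by construction in Subsection \ref{subsec: ppc}) open funnel sets, I would define the \emph{first exit time}
\[
t^\ast := \inf\{\, s \geq 0 : \mathbf{x}(s) \notin G(s)\,\},
\]
which is well defined and, by $\mathbf{x}(0)\in G(0)$, strictly positive. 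The heart of the argument is to show $t^\ast$ cannot be finite, i.e.\ to derive a contradiction from the existence of a finite first exit time.

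First I would set $\delta := \min_{i\in I}\delta_i > 0$, the minimum of the uniform slacks provided by condition (ii); finiteness of $I$ is what makes this minimum positive. By definition of $t^\ast$ we have $\mathbf{x}(s) \in G(s)$, hence $\mathbf{x}_i(s)\in G_i(s)$ for every $i\in I$, on the interval $[0,t^\ast)$. The key step is to translate this state containment into the assumption containment needed to invoke each local contract. Here I would use condition (iii): recalling that under the interconnection structure the internal input of $\Sigma_i$ is the stack of neighboring states, $\mathbf{w}_i(s) = [\mathbf{x}_{j_1}(s);\dots;\mathbf{x}_{j_{|\n_i|}}(s)]$ with $j_k\in\n_i$, containment of all neighbor states in their $G_j$ on $[0,t^\ast)$ together with $\prod_{j\in\n_i}G_j \subseteq A_i$ gives $\mathbf{w}_i(s)\in A_i(s)$ for all $s\in[0,t^\ast)$.

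Now comes the payoff of the \emph{uniform strong} definition. Fix any $t < t^\ast$ at which the assumption holds on $[0,t]$; condition (ii) then yields $\mathbf{x}_i(s)\in G_i(s)$ for all $s\in[0,t+\delta_i]$, and in particular for all $s\in[0,t+\delta]$. Letting $t\uparrow t^\ast$, this shows $\mathbf{x}_i(s)\in G_i(s)$ on $[0,t^\ast+\delta)$ for every $i$, hence $\mathbf{x}(s)\in G(s)$ on an interval strictly beyond $t^\ast$. This contradicts the definition of $t^\ast$ as the infimum of exit times, so $t^\ast=\infty$ and $\Sigma\models\mathcal{C}$. The main obstacle I anticipate is the careful handling of the limit $t\uparrow t^\ast$ together with the open-set nature of the guarantees: I must ensure the assumption $\mathbf{w}_i(s)\in A_i(s)$ genuinely holds up to but strictly before $t^\ast$ (not merely at $t^\ast$, where containment may fail on the boundary), and confirm that the extension by $\delta$ overshoots $t^\ast$ so that the contradiction is clean. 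This is precisely where the ordinary (weak) satisfaction notion would fail—it would only re-certify $G_i$ up to $t^\ast$ and never past it—so the uniform slack $\delta$ is doing the essential work.
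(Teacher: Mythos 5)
Your argument is correct and rests on exactly the same engine as the paper's proof: conditions (i) and (iii) seed the assumptions at $t=0$, condition (iii) converts containment of the neighbouring states in their guarantees into containment of $\mathbf{w}_i$ in $A_i$, and the \emph{uniform} slack $\delta := \min_{i\in I}\delta_i>0$ from condition (ii) (positive because $I$ is finite) pushes every guarantee strictly past the current time. The only real difference is packaging: the paper runs a forward induction, certifying $\mathbf{x}(s)\in G(s)$ on $[n\delta,(n+1)\delta]$ for each $n\in\mathbb{N}$ and covering $[0,\infty)$ constructively, whereas you argue by contradiction via a first exit time $t^\ast$ and show the $\delta$-extension overshoots it. Both are valid continuation arguments; the induction is slightly more self-contained since it never needs to reason about the exit set, while your version makes the role of uniformity vivid (a time-dependent $\delta$ could shrink to zero as $t\uparrow t^\ast$ and the overshoot would fail, which is precisely the pathology behind \cite[Example 9]{saoud2021assume}). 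One justification you should tighten: you assert $t^\ast>0$ from condition (i) together with openness of the funnel sets, but the theorem is stated for abstract time-varying sets $G_i(\cdot)$ with no openness or continuity hypothesis, and even for open sets a time-varying target does not by itself yield a positive dwell time; moreover your limit argument genuinely needs some $t\in[0,t^\ast)$ to exist. The clean fix is the paper's base case: $\mathbf{w}_i(0)\in\prod_{j\in\n_i}G_j(0)\subseteq A_i(0)$ by (i) and (iii), so (ii) applied at $t=0$ already gives $\mathbf{x}_i(s)\in G_i(s)$ on $[0,\delta_i]$, hence $t^\ast\geq\delta>0$. With that repair, letting $t\uparrow t^\ast$ yields $\mathbf{x}(s)\in G(s)$ on $[0,t^\ast+\delta)$ and the contradiction is sound.
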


\begin{proof}
Let $(\mathbf{x},\mathbf{u})\!:\!\mathbb{R}_{\geq 0} \!\rightarrow\! X \!\times\! U$ be a trajectory of system $\Sigma$. 
Then, from the definition of interconnected systems, we have for all $i \in I$,  $(\mathbf{x}_i,\mathbf{u}_i,\mathbf{w}_i):\mathbb{R}_{\geq 0} \rightarrow X_i\times U_i \times W_i$ is a trajectory of $\Sigma_i$, where $\mathbf{w}_i(t)  =[\mathbf{x}_{j_1}(t);\dots;\mathbf{x}_{j_{|\n_i|}}(t)]$. Let us show the existence of $\delta>0$ such that for all $n \in \mathbb{N}$, $\mathbf{x}(s) \in G(s)$ for all $s \in [n\delta, (n+1)\delta)]$. We have from (i) that for all $i\in I$, $\mathbf{w}_i(0)\!=\![\mathbf{x}_{j_1}(0);\dots;\mathbf{x}_{j_{|\n_i|}}(0)] \in \prod_{j\in \n_i} G_{i}(0) \subseteq A_{i}(0)$, where the last inclusion follows from (iii). Hence, it follows from (ii) the existence of $\delta_i>0$, $i\in I$ such that $\mathbf{x}_i(s) \in G_i(s)$ for all $s \in [0,\delta_i]$. Let us define $\delta>0$ as $\delta \!:=\! \min_{i \in I} \delta_i$ and let us show by induction that $\mathbf{x}(s) \in G(s)$ for all $s \in [n\delta, (n+1)\delta)]$. First, one has from above that $\mathbf{x}_i(s) \in G_i(s)$ for all $s \in [0,\delta]$, which implies that $\mathbf{x}(s) \in G(s)$ for all $s \in [0,\delta]$. Let us now assume that $\mathbf{x}(s) \in G(s)$ for all $s \in [n\delta, (n+1)\delta)]$ and show that $\mathbf{x}(s) \in G(s)$ for all $s \!\in\! [(n+1)\delta, (n+2)\delta)]$. We have from the assumption that that for all $i\in I$, and for all $s \in [n\delta, (n+1)\delta]$, $\mathbf{w}_i(s)\!=\!$ $[\mathbf{x}_{j_1}(s);\dots;\mathbf{x}_{j_{|\n_i|}}(s)] \in \prod_{j\in \n_i}  G_{i}(s) \subseteq A_i(s)$,  where the last inclusion follows from (iii). Hence, from (ii) one gets  for all $i \in I$, $\mathbf{x}_i(s) \in G_i(s)$ for all $s \in [(n+1)\delta,(n+1)\delta+\delta_i]$, which further implies that $\mathbf{x}_i(s) \in G_i(s)$ for all $s \in [(n+1)\delta,(n+2)\delta]$ since $\delta \!:=\! \min_{i \in I} \delta_i$.  Hence, $\mathbf{x}(s) \in G(s)$ for all $s \in [(n+1)\delta, (n+2)\delta)]$, and one has that $\mathbf{x}(s) \in G(s)$ for all $s \geq 0$, which implies that $\Sigma \models \mathcal{C}$.
\end{proof}

\begin{remark}
It is important to note that while in the definition of the strong contract satisfaction in~\cite{saoud2021assume} the parameter $\delta$ may depend on time, our definition of assume-guarantee contracts requires a uniform $\delta$ for all time. The reason for this choice is that the uniformity of $\delta$ is critical in our compositional reasoning, since we do not require the set of guarantees to be closed as in~\cite{saoud2021assume} (See~\cite[Example 9]{saoud2021assume} for an example, showing that the compositionality result does not hold using the concept of strong satisfaction when the set of guarantees of the contract is open). Indeed, as it will be shown in the next section, the set of guarantees of the considered contracts are open and one will fail to provide a compositionality result based on the classical (non-uniform) notion of strong satisfaction in~\cite{saoud2021assume}.
\end{remark}

\subsection{From STL tasks to assume-guarantee contracts}\label{subsec:stl_to_agc}
The objective of the paper is to synthesize local controllers $\mathbf{u}_i:X_i\times \mathbb{R}_{\geq 0} \rightarrow U_i$, $i\in \{1,2,\ldots,N\}$, for subsystems $\Sigma_i$ to achieve the STL specification $\phi$, where $\phi=\land_{i=1}^N\phi_i$ and $\phi_i$ is the local STL task assigned to subsystem $\Sigma_i$. Hence, in view of the interconnection between the subsystems and the decentralized nature of the local controllers, one has to make some assumptions on the behaviour of the neighbouring components while synthesizing the local controllers. This property can be formalized in terms of contracts, where the contract should reflect the fact that the objective is to ensure that subsystem $\Sigma_i$ satisfies ``the guarantee" $\phi_i$ under ``the assumption" that each of the neighbouring subsystems $\Sigma_j$, satisfies its local task $\phi_j$, $j\in \mathcal{N}_i$. In this context, and using the concept of funnel function to cast the local STL tasks, $\phi_i$, $i\in \{1,2,\ldots,N\}$, presented in Section~\ref{subsec: ppc}, a natural assignment of the local assume-guarantee contract $\C_i = (A_{i}, G_{i})$ for the subsystems $\Sigma_i$ as in Definition \ref{subsystem}, can be defined formally as follows: 
\begin{enumerate}
\item $A_{i} 
%= \{\mathbf{w}_i : E \rightarrow W_i \mid -\gamma_{j_k}(t)+\rho_{j_k}^{max} < \rho^{\psi_{j_k}}(\mathbf{x}_{j_k}(t)) < \rho_{j_k}^{max}, k \in \{1,\dots,|\n_i|\}\}$ \\
= \prod_{j\in \n_i} \{ \mathbf{x}_{j}: \mathbb{R}_{\geq 0}  \rightarrow X_{j}  \mid  -\gamma_{j}(t)+\rho_{j}^{max} <  \rho_j^{\psi_{j}}(\mathbf{x}_{j}(t)) < \rho_{j}^{max}, \forall t \in \mathbb{R}_{\geq 0} \}$,
\item $G_{i} = \{\mathbf{x}_i : \mathbb{R}_{\geq 0}  \rightarrow X_i \mid -\gamma_i(t)+\rho_{i}^{max} < \rho_i^{\psi_{i}}(\mathbf{x}_i(t)) < \rho_{i}^{max}, \forall t \in \mathbb{R}_{\geq 0} \}$,
%\item $A_{W_i} = \{[\mathbf{x}_{j_1}, \dots, \mathbf{x}_{j_{|\n_i|}} ]| 0 < \rho^{\phi_{j_k}}(\mathbf{x}_{j_k}) < \rho_{{j_k}}^{max}, k \in \{1,\dots,|\n_i|\}\}$ 
% \item $G_{X_i} = \{ \mathbf{x}_{i} | 0 < \rho^{\phi_i}(\mathbf{x}_{i}) < \rho_{i}^{max}\}$,
\end{enumerate}
where $\mathbf{x}_{j}$ denotes the state trajectories of the neighboring subsystem $\Sigma_j$, $j \!\in\! \n_i$, and $-\gamma_i, \rho_i^{\psi_i}, \rho_{i}^{max}$ are the functions discussed in Subsection \ref{subsec: ppc} corresponding to the STL task $\phi_i$. 

Once the specification $\phi$ is decomposed into local contracts\footnote{Note that the decomposition of a global STL formula is out of the scope of this paper. In this paper, we use a natural decomposition of the specification, where the assumptions of a component coincide with the guarantees of its neighbours. However, given a global STL for an interconnected system, one can utilize existing methods provided in recent literature, e.g., \cite{charitidou2021signal}, to decompose the global STL task into local ones.} and in view of Theorem~\ref{thm:main}, Problem~\ref{pro:1} can be resolved by considering local control problems for each subsystem $\Sigma_i$. These control problems can be solved in a decentralized manner and are formally defined as follows:

\begin{problem}
\label{pro:2}
Given a subsystem $\Sigma_i=(X_i,U_i,W_i,$ $f_i,g_i,h_i)$ and an assume-guarantee contract $\C_i=(A_i,G_i)$, where $A_i$ and $G_i$ are given by STL formulae by means of funnel functions, synthesize a local controller $\mathbf{u}_i:X_i\times \mathbb{R}_{\geq 0} \rightarrow U_i$ such that $\Sigma_i \models_{us} \C_i$.
\end{problem}

\section{Decentralized Controller Design}\label{sec:localcontroller}

In this section, we first provide a solution to Problem \ref{pro:2} by designing controllers ensuring that  local contracts for subsystems are uniformly strongly satisfied.
Then, we show that based on our compositionality result proposed in the last section, the global STL task for the network is satisfied by applying the derived local controllers to subsystems individually.

%\textcolor{blue}{We can use any controller synthesis approach, in this paper we use a Lyapunov-based approach, but any other tool can be used (barrieres, LMIs, ...}
\subsection{Local controller design}\label{Subsec:localcontroller}

As discussed in Subsection \ref{subsec: ppc}, one can enforce STL tasks via funnel-based strategy by prescribing the temporal behavior of $\rho_i^{\psi_i}(\mathbf{x}_i(t))$ within the predefined region in \eqref{goalineq}, i.e., 
$$- \gamma_i(t) < \rho_i^{\psi_i}(\mathbf{x}_i(t)) -  \rho_{i}^{max}  < 0.$$
In order to design feedback controllers to achieve this, 
we translate the funnel functions into notions of errors as follows. 
%let us introduce notions of one-dimensional error,  modulated error, and transformed error,  which will appear in the main result of this section. 
First, define a one-dimensional error as  
$e_i(\mathbf{x}_i(t)) = \rho_i^{\psi_i}(\mathbf{x}_i(t))-\rho_{i}^{max}$.
Now, by normalizing the error $e_i(\mathbf{x}_i(t))$ with respect to the funnel function $\gamma_i$, we define the modulated error as 
$\hat e_i(\mathbf{x}_i,t)=\frac{e_i(\mathbf{x}_i(t))}{\gamma_i(t)}$.
Now, \eqref{goalineq} can be rewritten as 
%$-\gamma_i(t) < e_i(\mathbf{x}_i(t)) < 0$, which in turn leads to 
$-1 < \hat e_i(t) < 0$. We use $\hat {\mathcal{D}_i} := (-1,0)$ 
%\{\hat e_i(t) \mid \hat e_i(t) \in (-1,0)\}$  
to denote the performance region for $\hat e_i(t)$.
Next, the modulated error is transformed through a transformation function  $T_i: (-1,0) \rightarrow \mathbb{R}$ defined as
$$T_i(\hat e_i(\mathbf{x}_i,t)) = \ln(-\frac{\hat e_i(\mathbf{x}_i,t)+1}{\hat e_i(\mathbf{x}_i,t)}).$$
Note that the transformation function  $T_i: (-1,0) \rightarrow \mathbb{R}$ is a strictly increasing function, bijective and hence admitting an inverse. 
%By applying the transformation function $T$, the inequality
%For the sake of brevity, we omit the argument $t$ from  $\epsilon_i$
By differentiating the transformed error $\epsilon_i := T_i(\hat e_i(\mathbf{x}_i,t)) $ w.r.t time, we obtain \vspace{-0.15cm}
\begin{align}\label{transerror_dynamics}
\dot \epsilon_i  = \mathcal{J}_i(\hat e_i,t)[\dot e_i + \alpha_i(t)e_i],
\end{align}
where  $\mathcal{J}_i(\hat e_i,t)\!=\! \frac{\partial{T_i(\hat e_i)}}{\partial{\hat e_i}}\frac{1}{\gamma_i(t)} \!=\! -\frac{1}{\gamma_i(t)\hat e_i(1+ \hat  e_i)}\!>\!0$, for all $\hat e_i \!\in\! (-1,0)$, is the normalized Jacobian of the transformation function,
and $\alpha_i(t) \!= \!-\frac{\dot \gamma_i(t)}{\gamma_i(t)} \!>\!0$ for all $t \!\in \!\mathbb{R}_{\geq 0} $ is the normalized derivative of the performance function $\gamma_i$. 

It can be readily seen that, if the transformed error $\epsilon_i$ is bounded for all $t$, then the modulated error $\hat e_i$ is constrained within the performance region $\hat {\mathcal{D}_i}$, which further implies that the error $e_i$ evolves within the prescribed funnel bounds as in \eqref{goalineq}. 
Furthermore, we pose the following two assumptions on  functions $\rho_i^{\psi_i}$ for formulae $\psi_i$, which are required for the local controller design in the our main result of this section.

\begin{assumption}\label{assmprho1}
Each formula within class $\psi$ as in \eqref{syntax1} has the following properties: (i) $\rho_i^{\psi_i}: \mathbb{R}^{n_i} \rightarrow \mathbb{R}$ is concave and (ii) the formula is well-posed in the sense that for all $C \in \mathbb{R}$ there exists $\bar C \geq 0$ such that for all $\mathbf{x}_i \in  \mathbb{R}^{n_i}$ with $\rho_i^{\psi_i}(\mathbf{x}_i) \geq C$, one has $\Vert \mathbf{x}_i \Vert \leq \bar C < \infty$.
\end{assumption}

Define the global maximum of $\rho_i^{\psi_i}(\mathbf{x}_i)$ as $\rho_i^{opt}= \sup_{\mathbf{x}_i \in \mathbb{R}^{n_i}}\rho_i^{\psi_i}(\mathbf{x}_i)$. Note that $\psi_i$ is feasible only if $\rho_i^{opt} >0$, which leads to the following assumption.
\begin{assumption}\label{assmprho2}
The global maximum of $\rho_i^{\psi_i}(\mathbf{x}_i)$ is positive.	
\end{assumption}

The following assumption is required on subsystems in order to design  controllers   enforcing local contracts. 
\begin{assumption}\label{assmp:lipschitz}
Consider subsystem $\Sigma_i$ as in Definition \ref{subsystem}.
The functions $f_i : \mathbb{R}^{n_i} \rightarrow  \mathbb{R}^{n_i}$, $g_i  : \mathbb{R}^{n_i} \rightarrow  \mathbb{R}^{n_i \times m_i}$, and $h_i : \mathbb{R}^{p_i} \rightarrow  \mathbb{R}^{n_i}$ are locally Lipschitz continuous,  and $g_i(\mathbf{x}_i)g_i(\mathbf{x}_i)^\top$ is positive definite for all $\mathbf{x}_i \in \mathbb{R}^{n_i}$.
\end{assumption}

%%%%%%%%%
Now, we provide an important result in Proposition \ref{weaktostrong} to be used to prove the main theorem, which shows how to go from \emph{weak} to \emph{uniform strong satisfaction} of AGCs by relaxing the assumptions. 
The following notion of $\varepsilon$-closeness of trajectories is needed to measure the distance between continuous-time trajectories.

\begin{definition}(\cite{goedel2012hybrid}, $\varepsilon$-closeness of trajectories)
Let $Z \subseteq \mathbb{R}^n$.
Consider $\varepsilon > 0$ and two continuous-time trajectories $z_1 :  \mathbb{R}_{\geq 0}  \rightarrow Z $ and $z_2 :  \mathbb{R}_{\geq 0}  \rightarrow Z $. Trajectory $z_2$ is said to be $\varepsilon$-close to  $z_1$, if for all $t_1 \in \mathbb{R}_{\geq 0} $, there exists $t_2 \in \mathbb{R}_{\geq 0} $ such that $|t_1 - t_2| \leq \varepsilon$ and $\Vert z_1(t_1) -z_2(t_2) \Vert  \leq \varepsilon$. We define the $\varepsilon$-expansion of $z_1$ by : $\mathcal{B}_{\varepsilon}(z_1) = \{ z' : \mathbb{R}_{\geq 0}  \rightarrow Z \mid z' \text{ is } \varepsilon  \text{-close to } z_1\}$. For set $A = \{ z: \mathbb{R}_{\geq 0} \rightarrow Z\}$, $\mathcal{B}_{\varepsilon}(A) = \cup_{z\in A} \mathcal{B}_{\varepsilon}(z)$.
\end{definition}
 
Now, we introduce the following proposition which will be used later to prove our main theorem.

\begin{proposition}\label{weaktostrong}
(From weak to uniformly strong satisfaction of AGCs)
Consider a subsystem $\Sigma_i=(X_i,U_i,W_i,$ $f_i,g_i,h_i)$ associated with a local AGC $\C_i=(A_i,G_i)$. If trajectories of $\Sigma_i$ are uniformly continuous and  $\Sigma_i \models \C_i^{\varepsilon}$ with $ \C_i^{\varepsilon} = (\mathcal{B}_{\varepsilon}(A_i), G_i)$ for $\varepsilon>0$, then $\Sigma_i \models_{us} \C_i$.
\end{proposition}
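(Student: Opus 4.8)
The plan is to use the $\varepsilon$-relaxation of the assumptions together with uniform continuity of the internal-input trajectory to convert an \emph{exact} satisfaction of the assumptions on a prefix $[0,t]$ into a \emph{relaxed} satisfaction on the strictly longer prefix $[0,t+\delta_i]$; weak satisfaction of $\C_i^\varepsilon$ then delivers the guarantee on all of $[0,t+\delta_i]$, which is exactly the extra look-ahead margin demanded by $\models_{us}$ in Definition \ref{asg}. Concretely, I would first fix an arbitrary trajectory $(\mathbf{x}_i,\mathbf{u}_i,\mathbf{w}_i)$ of $\Sigma_i$ and use the uniform continuity of $\mathbf{w}_i$ to select, for the given $\varepsilon>0$, a constant $\delta_i \in (0,\varepsilon]$ such that $|s-s'|\leq \delta_i$ implies $\Vert \mathbf{w}_i(s)-\mathbf{w}_i(s')\Vert \leq \varepsilon$. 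The key point is that this modulus $\delta_i$ depends only on the fixed trajectory and on $\varepsilon$, and \emph{not} on the time $t$; this is precisely what makes the resulting margin uniform, and it is the quantity that will play the role of $\delta_i$ in $\Sigma_i \models_{us}\C_i$.

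Next, consider any $t\geq 0$ for which the unrelaxed assumption holds on the prefix, i.e. $\mathbf{w}_i(s)\in A_i(s)$ for all $s\in[0,t]$. The central claim is that the relaxed assumption then holds on the enlarged prefix, namely $\mathbf{w}_i(s)\in \mathcal{B}_\varepsilon(A_i)(s)$ for all $s\in[0,t+\delta_i]$. For $s\in[0,t]$ this is immediate, since every assumption trajectory is $\varepsilon$-close to itself and hence $A_i(s)\subseteq \mathcal{B}_\varepsilon(A_i)(s)$. For $s\in(t,t+\delta_i]$ I would exploit the time-shift freedom built into the notion of $\varepsilon$-closeness: since $\mathbf{w}_i(t)\in A_i(t)$, there is an admissible assumption trajectory passing through the value $\mathbf{w}_i(t)$ at time $t$, and by the choice of $\delta_i$ one has both $|s-t|\leq\delta_i\leq\varepsilon$ and $\Vert \mathbf{w}_i(s)-\mathbf{w}_i(t)\Vert\leq\varepsilon$. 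Thus $\mathbf{w}_i(s)$ matches an admissible value at a time offset of at most $\varepsilon$ and a spatial distance of at most $\varepsilon$, placing it in $\mathcal{B}_\varepsilon(A_i)(s)$.

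Finally, having established $\mathbf{w}_i(s)\in\mathcal{B}_\varepsilon(A_i)(s)$ for all $s\in[0,t+\delta_i]$, I would invoke the hypothesis $\Sigma_i\models\C_i^\varepsilon$, with $\C_i^\varepsilon=(\mathcal{B}_\varepsilon(A_i),G_i)$, directly on the interval $[0,t+\delta_i]$ to conclude $\mathbf{x}_i(s)\in G_i(s)$ for all $s\in[0,t+\delta_i]$. Since $t$ was an arbitrary time at which the original assumptions hold on $[0,t]$, and $\delta_i$ was fixed independently of $t$, this is exactly the statement $\Sigma_i\models_{us}\C_i$.

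The step I expect to be the main obstacle is the central claim of the second paragraph, namely upgrading exact satisfaction on $[0,t]$ to $\varepsilon$-relaxed satisfaction on the strictly larger $[0,t+\delta_i]$. The delicate part is the interplay between the \emph{temporal} slack (the offset $|s-t|$ permitted by $\varepsilon$-closeness) and the \emph{spatial} slack (the $\varepsilon$-ball), tied together by the constraint $\delta_i\leq\varepsilon$ through uniform continuity; one must read $\mathcal{B}_\varepsilon(A_i)(s)$ as the space-time $\varepsilon$-fattening of the assumption tube and use that $\mathbf{w}_i(t)\in A_i(t)$ furnishes the required witness in $A_i$. This is also exactly where the openness of the guarantee and assumption sets is accommodated: the relaxation supplies the buffer that an open funnel cannot provide on its own, which is why the plain (non-uniform) strong satisfaction of \cite{saoud2021assume} would not suffice here.
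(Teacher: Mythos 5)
Your proposal is correct and follows essentially the same route as the paper's proof: extract a trajectory-dependent but time-uniform $\delta_i$ from uniform continuity of $\mathbf{w}_i$, upgrade exact satisfaction of $A_i$ on $[0,t]$ to satisfaction of $\mathcal{B}_\varepsilon(A_i)$ on $[0,t+\delta_i]$, and then invoke weak satisfaction of $\C_i^\varepsilon$. In fact you spell out the one step the paper merely asserts (why the relaxed assumptions hold on the extended prefix, via the interplay of the temporal offset $\delta_i\leq\varepsilon$ and the spatial $\varepsilon$-ball in the definition of $\varepsilon$-closeness, with $\mathbf{w}_i(t)\in A_i(t)$ as the witness), which is a worthwhile addition rather than a deviation.
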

\begin{proof}
    Consider $\varepsilon>0$ such that $\Sigma_i \models \C_i^{\varepsilon}$. From uniform continuity of $\mathbf{w}_i:\mathbb{R}_{\geq 0} \rightarrow W_i$ and for $\varepsilon>0$, we have the existence of $\delta>0$ such that for all $t \geq 0$, if $\mathbf{w}_i(s) \in A_i$, for all $s \in [0,t]$, then $\mathbf{w}_i(s) \in \mathcal{B}_{\varepsilon}(A_i)$, for all $s \in [0,t+\delta]$. Let us now show the uniform strong  satisfaction of contracts. Consider the $\delta >0$ defined above, consider $t \geq 0$ and assume that $\mathbf{w}_i(s) \in A_i(s)$ for all $s \in [0,t]$. Hence, we have from above that $\mathbf{w}_i(s) \in \mathcal{B}_{\varepsilon}(A_i)(s)$, for all $s \in [0,t+\delta]$, which implies from the weak satisfaction of the contract $\C_i^{\varepsilon}$ that $\mathbf{x}_i(s) \in G_i(s)$, for all $s \in [0,t+\delta]$. Hence, $\Sigma_i \models_{us} \C_i$.
\end{proof}

%\textcolor{blue}{Remove the indice i in the next result and its proof}
Now, we are ready to present the main result of this section solving Problem \ref{pro:2} for the local controller design.
\begin{theorem}\label{theorem}
Consider subsystem $\Sigma_i$ as in Definition \ref{subsystem} satisfying Assumption \ref{assmp:lipschitz}, with corresponding local assume-guarantee contract $\C_i = (A_{i}, G_{i})$, where
\begin{enumerate}
\item $A_{i} 
%= \{\mathbf{w}_i : E \rightarrow W_i \mid -\gamma_{j_k}(t)+\rho_{j_k}^{max} < \rho^{\psi_{j_k}}(\mathbf{x}_{j_k}(t)) < \rho_{j_k}^{max}, k \in \{1,\dots,|\n_i|\}\}$ \\
= \prod_{j\in \n_i} \{ \mathbf{x}_{j}: \mathbb{R}_{\geq 0}  \rightarrow X_{j}    \mid  -\gamma_{j}(t)+\rho_{j}^{max} < \rho_j^{\psi_{j}}(\mathbf{x}_{j}(t)) < \rho_{j}^{max}, \forall t \in \mathbb{R}_{\geq 0} \}$,
\item $G_{i} = \{\mathbf{x}_i : \mathbb{R}_{\geq 0}  \rightarrow X_i \mid -\gamma_i(t)+\rho_{i}^{max} < \rho_i^{\psi_{i}}(\mathbf{x}_i(t)) < \rho_{i}^{max}, \forall t \in \mathbb{R}_{\geq 0} \}$,
%\item $A_{W_i} = \{[\mathbf{x}_{j_1}, \dots, \mathbf{x}_{j_{|\n_i|}} ]| 0 < \rho^{\phi_{j_k}}(\mathbf{x}_{j_k}) < \rho_{{j_k}}^{max}, k \in \{1,\dots,|\n_i|\}\}$ 
% \item $G_{X_i} = \{ \mathbf{x}_{i} | 0 < \rho^{\phi_i}(\mathbf{x}_{i}) < \rho_{i}^{max}\}$,
\end{enumerate}
where  $\psi_i$ is an atomic formula as in \eqref{syntax1} satisfying Assumptions \ref{assmprho1}-\ref{assmprho2}.
%	Suppose that for each subsystem $\Sigma_i$, the assumption $A_{W_i}$ holds.
	If $-\gamma_i(0)\!+\!\rho_{i}^{max} \!<\! \rho_i^{\psi_i}(\mathbf{x}_i(0)) \!<\! \rho_{i}^{max} \!<\! \rho_i^{opt}$,	
	%$-\gamma_i^0+\rho_{i}^{max} < \rho_i^\psi(\mathbf{x}_i(0))$ and $-\gamma_i^{\infty} + \rho_{i}^{max} < \rho_i^{opt} < \rho_{i}^{max}$, 
	%\PJ{$\psi_i$ everywhere?} 
	then the controller \vspace{-0.3cm}
	\begin{align} \notag
		\mathbf u_i(\mathbf{x}_i, t) =  &-g_i(\mathbf{x}_i)^\top\frac{\partial\rho_i^{\psi_i}(\mathbf{x}_i)^\top}{\partial \mathbf{x}_i} \mathcal{J}_i(\hat e_i,t)\epsilon_i(\mathbf{x}_i, t)  \\ \label{controller}
		& - g_i(\mathbf{x}_i)^\top %(g_i(\mathbf{x}_i)g_i(\mathbf{x}_i)^\top)^{-1}
		h_i(d_i(t))
	\end{align}\\[-20pt]
ensures that $\Sigma_i \!\!\models_{us} \!\! \mathcal{C}_i$,
where $d_i(t)\!\!\!\!=\!\!\![\gamma_{j_1}(t)\mathbf{1}_{n_{j_1}};\!\dots\!;$ $\gamma_{j_{|\n_i|}}(t)\mathbf{1}_{n_{|\n_i|}}]$.
%is satisfied for all $t \in \mathbb{R}_{\geq 0} $ for all subsystems.
%	Then it holds that  $0 < r_i \leq \rho_i^\phi(\mathbf{x}_i,0) \leq \rho_{i}^{max}$, for all subsystems. 
\end{theorem}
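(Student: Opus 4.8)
The plan is to reduce uniform strong satisfaction to a boundedness statement about the transformed error $\epsilon_i$ and then invoke Proposition~\ref{weaktostrong}. Recall from Subsection~\ref{Subsec:localcontroller} that $\mathbf{x}_i(t)\in G_i(t)$ is equivalent to $\hat e_i(\mathbf{x}_i,t)\in(-1,0)$, which holds precisely when $\epsilon_i = T_i(\hat e_i)$ is finite; since $T_i:(-1,0)\to\R$ is a strictly increasing bijection, a uniform bound on $|\epsilon_i|$ keeps $\hat e_i$ in a compact subset of $(-1,0)$ and thus in the funnel \eqref{goalineq}. First I would establish \emph{weak} satisfaction of the $\varepsilon$-relaxed contract $\C_i^\varepsilon=(\mathcal{B}_\varepsilon(A_i),G_i)$ for a suitable $\varepsilon>0$, and separately verify that closed-loop trajectories are uniformly continuous; Proposition~\ref{weaktostrong} then upgrades this to $\Sigma_i\models_{us}\C_i$ directly.

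For the weak-satisfaction step I would study the closed-loop behaviour of $\epsilon_i$ through the Lyapunov-type function $V_i=\tfrac12\epsilon_i^2$. Differentiating along \eqref{transerror_dynamics} gives $\dot V_i=\epsilon_i\,\mathcal{J}_i(\hat e_i,t)\big[\dot e_i+\alpha_i(t)e_i\big]$, into which I substitute $\dot e_i=\tfrac{\partial\rho_i^{\psi_i}}{\partial\mathbf{x}_i}\big(f_i(\mathbf{x}_i)+g_i(\mathbf{x}_i)\mathbf{u}_i+h_i(\mathbf{w}_i)\big)$ together with the controller \eqref{controller}. The feedback term of \eqref{controller} produces the dissipative contribution $-\mathcal{J}_i^2\big\Vert g_i(\mathbf{x}_i)^\top\tfrac{\partial\rho_i^{\psi_i}}{\partial\mathbf{x}_i}^\top\big\Vert^2\epsilon_i^2$, while the term $-g_i(\mathbf{x}_i)^\top h_i(d_i(t))$ counteracts the internal-input contribution $h_i(\mathbf{w}_i)$ admitted by $A_i$. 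The remaining drift and residual internal-input terms are bounded on the admissible region: the (relaxed) assumption bounds $\rho_j^{\psi_j}(\mathbf{x}_j)$ from below for $j\in\n_i$, so by well-posedness (Assumption~\ref{assmprho1}(ii)) the neighbouring states, hence $\mathbf{w}_i$, lie in a compact set, and local Lipschitzness (Assumption~\ref{assmp:lipschitz}) bounds $f_i$, $g_i$, $h_i(\mathbf{w}_i)$, $h_i(d_i(t))$ and $\tfrac{\partial\rho_i^{\psi_i}}{\partial\mathbf{x}_i}$ there. I would then show $\dot V_i<0$ whenever $|\epsilon_i|$ exceeds a threshold, confining $\epsilon_i$ to a compact interval and keeping $\hat e_i$ bounded away from $\{-1,0\}$.

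I expect the main obstacle to be preventing the dissipative gain from degenerating, i.e. showing that $\big\Vert g_i(\mathbf{x}_i)^\top\tfrac{\partial\rho_i^{\psi_i}}{\partial\mathbf{x}_i}^\top\big\Vert$ stays bounded away from zero on the boundary layer, so that the feedback retains authority exactly when $\hat e_i\to 0$ (equivalently $\rho_i^{\psi_i}(\mathbf{x}_i)\to\rho_i^{max}$); this matters because both $\mathcal{J}_i$ and $|\epsilon_i|$ blow up there, and the quadratic dissipation must dominate the bounded drift. Here the standing hypotheses combine: positive definiteness of $g_i(\mathbf{x}_i)g_i(\mathbf{x}_i)^\top$ gives $\Vert g_i^\top v\Vert^2\geq\lambda_{\min}(g_ig_i^\top)\Vert v\Vert^2$, and concavity of $\rho_i^{\psi_i}$ with $\rho_i^{max}<\rho_i^{opt}$ (Assumptions~\ref{assmprho1}(i) and \ref{assmprho2}) guarantees that $\tfrac{\partial\rho_i^{\psi_i}}{\partial\mathbf{x}_i}$ cannot vanish at any state with $\rho_i^{\psi_i}(\mathbf{x}_i)=\rho_i^{max}$, since a differentiable concave function has vanishing gradient only at a global maximizer, whose value is $\rho_i^{opt}>\rho_i^{max}$. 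I would make this quantitative by noting that $\{\mathbf{x}_i:\rho_i^{\psi_i}(\mathbf{x}_i)\in[\,\rho_i^{max}-\gamma_i^0,\ \rho_i^{max}\,]\}$ is compact, so the continuous, nowhere-vanishing gain attains a positive minimum there.

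Finally, I would assemble the pieces. Local Lipschitzness of the closed-loop vector field on the open set $\{\hat e_i\in(-1,0)\}$, via Assumption~\ref{assmp:lipschitz} and the smoothness of $T_i$ and $\gamma_i$, yields a unique maximal solution on some $[0,\tau_{\max})$; the bound on $\epsilon_i$ precludes finite escape in the standard prescribed-performance manner, forcing $\tau_{\max}=\infty$ and $\hat e_i(t)\in(-1,0)$ for all $t$, i.e. weak satisfaction of $\C_i^\varepsilon$ under the hypothesis $-\gamma_i(0)+\rho_i^{max}<\rho_i^{\psi_i}(\mathbf{x}_i(0))<\rho_i^{max}$. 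The same bound confines $\mathbf{x}_i$ (well-posedness) and hence $\dot{\mathbf{x}}_i$ through \eqref{eqn:subsys}, so $\mathbf{x}_i$ has bounded derivative and is uniformly continuous. Invoking Proposition~\ref{weaktostrong} then delivers $\Sigma_i\models_{us}\C_i$, completing the argument.
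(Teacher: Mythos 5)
Your proposal is correct and follows essentially the same route as the paper's proof: reduce uniform strong satisfaction to weak satisfaction of the $\varepsilon$-relaxed contract via Proposition~\ref{weaktostrong}, then run a Lyapunov argument on $V=\tfrac12\epsilon_i^2$, bounding the perturbation terms by compactness (Assumption~\ref{assmprho1}(ii)) and bounding the dissipative gain away from zero using concavity together with $\rho_i^{max}<\rho_i^{opt}$ and positive definiteness of $g_ig_i^\top$. The only differences are cosmetic: you argue ultimate boundedness of $\epsilon_i$ directly (and explicitly handle maximal solutions and the uniform-continuity hypothesis of Proposition~\ref{weaktostrong}), whereas the paper routes the same boundedness conclusion through the auxiliary function $\mathcal{S}(\hat e_i)=1-e^{-V}$ and a level-set attraction argument.
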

\begin{proof}
We prove the uniform strong satisfaction of the contract using Proposition \ref{weaktostrong}. Let $(\mathbf{x}_i,\mathbf{u}_i,\mathbf{w}_i) : \mathbb{R}_{\geq 0}  \rightarrow X_i \times U_i \times W_i$ be a trajectory of $\Sigma_i$. Since $-\gamma_i(0)+\rho_{i}^{max} < \rho_i^{\psi_i}(\mathbf{x}_i(0)) < \rho_{i}^{max}$ holds, we have $\mathbf{x}_i(0) \in G_{i}$. Now, consider  $\varepsilon >0$. Let us prove $\Sigma_i \models \mathcal{C}_i^{\varepsilon}$, where $\mathcal{C}_i^{\varepsilon} = (\mathcal{B}_{\varepsilon}(A_{i}), G_{i})$. Let $s \in \mathbb{R}_{\geq 0} $, such that for all $t \in [0,s]$, $\mathbf{w}_i(t) \in \mathcal{B}_{\varepsilon}(A_{i})$, i.e., for all $j \in  \n_i $, 
$-\gamma_{j}(t)+\rho_{j}^{max}-\varepsilon < \rho_j^{\psi_{j}}(\mathbf{x}_{j}(t)) < \rho_{j}^{max}+  \varepsilon $ holds for all $t \in [0,s]$. Next, we show that ${\mathbf{x}_i}_{|[0,s]} \in G_{i}$.

% \begin{itemize}
% \item for all $t \in \mathbb{R}_{\geq 0} $, if $\mathbf{w}_{|[0,t]} \in A_W$, then $\mathbf{x}_{|[0,t]} \in G_X$.
% \end{itemize}

Now, consider a Lyapunov-like function $V : \mathbb{R} \rightarrow \mathbb{R}_{\geq 0}$ defined as $V(\epsilon_i) = \frac{1}{2}\epsilon_i^2$. By differentiating $V$ with respect to time, we obtain 
\begin{align} \notag
 \dot V(t) =& \epsilon_i\dot \epsilon_i \stackrel{\eqref{transerror_dynamics}}= \epsilon_i\mathcal{J}_i(\hat e_i,t)[\dot e_i + \alpha_i(t)e_i] \\ \notag
 =& \epsilon_i\mathcal{J}_i(\hat e_i,t)[\frac{\partial{\rho_i^{\psi_i}(\mathbf{x}_i)}}{\partial \mathbf{x}_i}^\top \mathbf{\dot x}_i(t) - \frac{\dot \gamma_i(t)}{\gamma_i(t)}e_i] \\ \notag 
=& \epsilon_i\mathcal{J}_i(\hat e_i,t)\frac{\partial{\rho_i^{\psi_i}(\mathbf{x}_i)}}{\partial \mathbf{x}_i}^\top \big(f_i(\mathbf{x}_i)+ g_i(\mathbf{x}_i)\mathbf{u}_i(t)+h_i(\mathbf{w}_i)\big)\\ \label{dotV0}
&- \epsilon_i\mathcal{J}_i(\hat e_i,t)\dot \gamma_i(t)\hat e_i.
\end{align} 
\begin{figure*}[ht]
	\rule{\textwidth}{0.2pt}
	\small{
\begin{align}\notag
 \dot V(t) \stackrel{\eqref{controller}}= &\epsilon_i\mathcal{J}_i(\hat e_i,t)\frac{\partial{\rho_i^{\psi_i}(\mathbf{x}_i)}}{\partial \mathbf{x}_i}^\top \Big( f_i(\mathbf{x}_i)\!\!+\!\! g_i(\mathbf{x}_i)\big(\!\!-\!\!g_i(\mathbf{x}_i)^\top\frac{\partial\rho_i^{\psi_i}(\mathbf{x}_i)^\top}{\partial \mathbf{x}_i} \mathcal{J}_i(\hat e_i,t)\epsilon_i(\mathbf{x}_i, t)\!\! -\!\! g_i(\mathbf{x}_i)^\top h_i(d_i(t)) \big)\!\!  +\!\! h_i(\mathbf{w}_i(t))\Big)\!\!-\!\! \epsilon_i\mathcal{J}_i(\hat e_i,t)\dot \gamma_i(t)\hat e_i \\ \notag
=  &  -\epsilon_i\mathcal{J}_i(\hat e_i,t)\frac{\partial{\rho_i^{\psi_i}(\mathbf{x}_i)}}{\partial \mathbf{x}_i}^\top  g_i(\mathbf{x}_i)g_i(\mathbf{x}_i)^\top\frac{\partial\rho_i^{\psi_i}(\mathbf{x}_i)^\top }{\partial \mathbf{x}_i} \mathcal{J}_i(\hat e_i,t)\epsilon_i(\mathbf{x}_i, t) + \epsilon_i\mathcal{J}_i(\hat e_i,t)\Big(\frac{\partial{\rho_i^{\psi_i}(\mathbf{x}_i)}}{\partial \mathbf{x}_i}^\top  \big( f_i(\mathbf{x}_i)  +  h_i(\mathbf{w}_i) \\ \notag
&-  g_i(\mathbf{x}_i)g_i(\mathbf{x}_i)^\top h_i(d_i(t))\big) \!\!-\!\! \dot \gamma_i(t)\hat e_i \Big) \\ \notag
\leq  & -2(\lambda_{\min}(g_i(\mathbf{x}_i)g_i(\mathbf{x}_i)^\top)-\xi) \Vert \frac{\partial{\rho_i^{\psi_i}(\mathbf{x}_i)}}{\partial \mathbf{x}_i} \Vert^2 (\mathcal{J}_i(\hat e_i,t))^2 V - \xi \Vert \frac{\partial{\rho_i^{\psi_i}(\mathbf{x}_i)}}{\partial \mathbf{x}_i} \Vert^2(\epsilon_i\mathcal{J}_i(\hat e_i,t))^2  + \epsilon_i\mathcal{J}_i(\hat e_i,t)\Big(\frac{\partial{\rho_i^{\psi_i}(\mathbf{x}_i)}}{\partial \mathbf{x}_i}^\top \big( f_i(\mathbf{x}_i)  \\ \label{dotV} 
&+ h_i(\mathbf{w}_i) -g_i(\mathbf{x}_i)g_i(\mathbf{x}_i)^\top h_i(d_i(t)) \big) - \dot \gamma_i(t)\hat e_i \Big) \leq -\kappa V + \eta(t),
\end{align}
}
\rule{\textwidth}{0.2pt}
\vspace{-0.7cm}
\end{figure*}
Then, by substituting the control law \eqref{controller} in \eqref{dotV0}, for some  $0 < \xi < \lambda_{\min}(g_i(\mathbf{x}_i)g_i(\mathbf{x}_i)^\top)$, we get the chain of inequality as in \eqref{dotV},
where 
$\kappa =  2(\lambda_{\min}(g_i(\mathbf{x}_i)g_i(\mathbf{x}_i)^\top)-\xi) \min_{\mathbf{x}_i} (\Vert \frac{\partial{\rho_i^{\psi_i}(\mathbf{x}_i)}}{\partial \mathbf{x}_i} \Vert^2)
\frac{1}{\sup_{t \in \mathbb{R}_{\geq 0}  }\gamma_i(t)^2}\min_{\hat e_i \in \hat{\mathcal{D}_i}} (\frac{1}{\hat e_i(1+ \hat  e_i)})^2$, and
$\eta(t)\!\!  = \!\!  \frac{1}{\xi}\big( \Vert f_i(\mathbf{x}_i)\Vert ^2 \!+\!
\Vert h_i(\mathbf{w}_i)
\!- \!g_i(\mathbf{x}_i)g_i(\mathbf{x}_i)^\top h_i(d_i(t))\Vert^2\big)$ $+
\frac{\dot \gamma_i(t)^2\hat e_i^2}{2\xi\Vert \frac{\partial{\rho_i^{\psi_i}(\mathbf{x}_i)}}{\partial \mathbf{x}_i} \Vert^2}$.
%\end{align}
Note that according to Assumption \ref{assmp:lipschitz} that  $g_i(\mathbf{x}_i)g_i(\mathbf{x}_i)^\top$ is positive definite, we know that $\lambda_{\min}(g_i(\mathbf{x}_i)g_i(\mathbf{x}_i)^\top) > 0$, hence a positive constant $\xi$ satisfying $0 < \xi < \lambda_{\min}(g_i(\mathbf{x}_i)g_i(\mathbf{x}_i)^\top)$ always exists. 

Now, we proceed with finding an upper bound  of $\eta(t)$, $t \in [0,s]$. 
Since $-\gamma_i(0)+\rho_{i}^{max} < \rho_i^{\psi_i}(\mathbf{x}_i(0)) < \rho_{i}^{max}$, $\mathbf{x}_i(0)$ is such that $\hat e_i(\mathbf{x}_i(0),0) \in \hat{\mathcal{D}_i} = (-1,0)$. Now, define the set $\mathcal{X}_i(t) := \{\mathbf{x}_i \in \mathbb{R}^{n_i}| -1< \hat e_i(\mathbf{x}_i,t) = \frac{\rho_i^{\psi_i}(\mathbf{x}_i)-\rho_{i}^{max}}{\gamma_i(t)} <0 \}$. Note that $\mathcal{X}_i(t)$ has the property that for $t_1<t_2$, $\mathcal{X}_i(t_2) \subseteq  \mathcal{X}_i(t_1)$ holds since $\gamma_i(t)$ is non-increasing in $t$.
%$\mathcal{X}_i(t)$ is non-empty for each $t \geq 0$ since $-\gamma_i^{\infty} + \rho_{i}^{max} < \rho_i^{opt} < \rho_{i}^{max}$, and 
Also note that $\mathcal{X}_i(t)$ is bounded due to condition (ii) of Assumption \ref{assmprho1} and $\gamma_i$ is bounded by definition, for all $i \in [1;N]$. According to  \cite[Proposition 1.4.4]{aubin2009set}, the inverse image of an open  set under a continuous function is open. By defining $\hat e_{i,0}(\mathbf{x}_i) := \hat e_i(\mathbf{x}_i, 0)$, we obtain that the inverse image $\hat e_{i,0}^{-1}(\hat {\mathcal{D}_i}) = \mathcal{X}_i(0)$ is open.
By the continuity of functions $f_i$, $g_i$ and $h_i$, 
%on $cl(\mathcal{X}_i(0))$, 
it holds that for all states $\mathbf{x}_i \in \mathcal{X}_i(0)$, $\Vert f_i(\mathbf{x}_i)\Vert$ and $\Vert g_i(\mathbf{x}_i)g_i(\mathbf{x}_i)^\top h_i(d_i(t))\Vert$
are upper bounded, where 
$d_i(t)=[\gamma_{j_1}(t)\mathbf{1}_{n_{j_1}};\dots;\gamma_{j_{|\n_i|}}(t)\mathbf{1}_{n_{|\n_i|}}]$.
Note that by combining condition (ii) of Assumption \ref{assmprho1} and the assumption that $\mathbf{w}_i(t) \in \mathcal{B}_{\varepsilon}(A_{W_i})$, for all $t \in [0,s]$,  where $\mathbf{w}_i= [\mathbf{x}_{j_1};\dots;\mathbf{x}_{j_{|\n_i|}}]$,
 it holds that   $\Vert h_i(\mathbf{w}_i) \Vert$  is also upper bounded. 
Additionally, note that $\frac{\partial{\rho_i^{\psi_i}(\mathbf{x}_i)}}{\partial \mathbf{x}_i} = 0$ if and only if  $\rho_i^{\psi_i}(\mathbf{x}_i) = \rho_i^{opt}$ since  $\rho_i^\psi(\mathbf{x}_i)$ is concave under Assumption \ref{assmprho1}. 
However, since  
%$-\gamma_i(0)+\rho_{i}^{max} < \rho_i^\psi(\mathbf{x}_i(0)) < \rho_{i}^{max}$ and 
$ \rho_i^{\psi_i}(\mathbf{x}_i(0)) < \rho_{i}^{max} < \rho_i^{opt}$, and 
for all states $\mathbf{x}_i \in \mathcal{X}_i(0)$, $\rho_i^{\psi_i}(\mathbf{x}_i) < \rho_{i}^{max}$ holds,
then, we have for all states $\mathbf{x}_i \in \mathcal{X}_i(0)$, $\frac{\partial{\rho_i^{\psi_i}(\mathbf{x}_i)}}{\partial \mathbf{x}_i} \neq 0_{n_i}$, and $\Vert \frac{\partial{\rho_i^{\psi_i}(\mathbf{x}_i)}}{\partial \mathbf{x}_i} \Vert^2 \geq k_{\rho} >0$ holds for a positive constant $k_{\rho}$. Let us denote by $k_f \in  \mathbb{R}_{\geq 0} $, $k_h \in  \mathbb{R}_{\geq 0} $, and $k_g \in  \mathbb{R}_{\geq 0} $ the upper bounds satisfying $\max_{\mathbf{x}_i \in \mathcal{X}_i(0)} \Vert f_i(\mathbf{x}_i)\Vert \leq k_f$, $\max_{\mathbf{w}_i \in \mathcal{B}_{\varepsilon}(A_{W_i})}$ $\Vert h_i(\mathbf{w}_i)\Vert \leq k_h$, and $\max_{\mathbf{x}_i \in \mathcal{X}_i(0)} \Vert g_i(\mathbf{x}_i)g_i(\mathbf{x}_i)^\top h_i(d_i(t))\Vert \leq k_g$, respectively. 
Consequently, we can define an upper bound $\bar \eta$ of $\eta(t)$, $\forall \mathbf{x}_i \in \mathcal{X}_i(0)$, $\forall t \in \mathbb{R}_{\geq 0} $, as $\bar \eta =  \frac{k_f^2+k_h^2+k_g^2}{\xi} + \frac{|\dot \gamma_i(0)|^2}{2\xi k_{\rho}}$, where $|\dot \gamma_i(0)|$ is bounded by definition.

Next, we show that $\hat{\mathcal{D}_i}$ is an attraction set. 
To do this, we first introduce a function $\mathcal{S}(\hat{e}_i) = 1- e^{-V(\hat{e}_i)}$ for which $0 < \mathcal{S}(\hat{e}_i) <1$, $\forall \hat{e}_i \in \hat {\mathcal{D}_i}$ and $\mathcal{S}(\hat{e}_i) \rightarrow 1$ as $\hat{e}_i \rightarrow \partial \hat {\mathcal{D}_i}$. By differentiating $\mathcal{S}(\hat{e}_i)$ we get 
\begin{align}\label{dotmV}
\dot {\mathcal{S}}(t) = \dot V(\hat{e}_i)\big( 1-\mathcal{S}(\hat{e}_i)\big) .
\end{align}
By substituting \eqref{dotV} and inserting $V(\hat{e}_i) = - \ln{(1-\mathcal{S}(\hat{e}_i))}$ in \eqref{dotmV}, we get
\begin{align}\label{dotmVb}
\hspace{-0.1cm}
\dot {\mathcal{S}}(t) \leq -\kappa \big(1-\mathcal{S}(\hat{e}_i) \big)\big(\ln{( e^{-\frac{\eta(t)}{\kappa}})}- \ln{(1-\mathcal{S}(\hat{e}_i))}\big).
\end{align}
Note that by definition, we have $\kappa>0$ and $1-\mathcal{S}(\hat{e}_i)> 0$. 
Now define the region $\Omega_{\hat e_i} = \{\hat e_i \in \hat{\mathcal{D}_i} | \mathcal{S}(\hat e_i) \leq 1- e^{-\frac{\bar \eta}{\kappa}}\}$. 
%It can be readily seen that  $\dot {\mathcal{S}}(t) < 0$ holds for all $\hat e_i \notin \Omega_{\hat e_i}$. 
%Next, we show that $\hat{\mathcal{D}_i}$ is a set of attraction.
Since $-\gamma_i(0)+\rho_{i}^{max} < \rho_i^{\psi_i}(\mathbf{x}_i(0)) < \rho_{i}^{max}$ holds, then we obtain that $\hat e_i(\mathbf{x}_i(0)) \in \hat{\mathcal{D}_i} = (-1,0)$,
%which means that the initial error are defined within the prescribed performance bounds.
and consequently, $\mathcal{S}(\hat{e}_i(0)) <1$ holds. Let us define $c = \mathcal{S}(\hat{e}_i(0))$ and the set $\Omega_c = \{ \hat e_i \in \hat{\mathcal{D}_i} | \mathcal{S}(\hat e_i) \leq c \}$. Now, consider the case when $c < 1- e^{-\frac{\bar \eta}{\kappa}}$. In this case, $\Omega_c \subset \Omega_{\hat e_i}$, and by \eqref{dotmVb}, $\dot {\mathcal{S}}(t) \leq 0$ for all  $\hat e_i \in \partial{\Omega_{\hat e_i}}$, therefore, $\hat e_i(t) \in \Omega_{\hat e_i}$, $\forall t \in \mathbb{R}_{\geq 0} $. Next, consider the other case when $c \geq 1- e^{-\frac{\bar \eta}{\kappa}}$. In this case $\Omega_{\hat e_i} \subseteq \Omega_c$, and by \eqref{dotmVb}, $\dot {\mathcal{S}}(t) < 0$ for all  $\hat e_i \in \Omega_c \setminus \Omega_{\hat e_i}$, hence, $\mathcal{S}(\hat e_i) \rightarrow \Omega_{\hat e_i}$. Thus, starting from any point within the set $\Omega_c$, $\mathcal{S}(\hat e_i(t))$ remains less than $1$. Consequently, the modulated error $\hat e_i$ always evolves within a closed strict subset of  $\hat{\mathcal{D}_i}$  (that is, set $\Omega_{\hat e_i}$ in the case that $c < 1- e^{-\frac{\bar \eta}{\kappa}}$,  or set $\Omega_c$ in the case that $c \geq 1- e^{-\frac{\bar \eta}{\kappa}}$), which implies that $\hat e_i$ is not approaching the boundary $\partial \hat {\mathcal{D}_i}$. It follows that  the transformed error $\epsilon_i$ is bounded. Thus, we can conclude that  $\rho_i^{\psi_i}(\mathbf{x}_i(t))$ evolves within the predefined region \eqref{goalineq}, i.e., ${\mathbf{x}_i}_{|[0,s]} \in G_{i}$. 
%the guarantee $G_{X_i}$ for subsystem $\Sigma_i$ is satisfied. 
Therefore, we have $\Sigma_i \models \mathcal{C}_i^{\varepsilon}$. By Proposition \ref{weaktostrong}, it implies that $\Sigma_i \models_{us} \mathcal{C}_i$.
\end{proof}

The proof of Theorem \ref{theorem} was partly inspired by the proof of \cite[Thm. 1]{karayiannidis2012multi}, where similar Lyapunov arguments were used in the context of funnel-based consensus control of multi-agent systems.

Remark that the connection between atomic formulae $\rho_i^{\psi_{i}}(\mathbf{x}_i(t))$ and temporal formulae $ \rho_i^{\phi_i}(\mathbf{x}_i,0)$ is made by $\gamma_i$ and $\rho_{i}^{max}$ as in \eqref{goalineq}, which need to be designed as instructed in \cite{larsCDC}. 
Specifically, 
if Assumption \ref{assmprho2} holds, select  \vspace{-0.2cm}

% \begin{align*} \label{funnelpara1}
% t_{i}^{\ast}\in & \begin{cases} a_{i} & \text{if}\ \phi_{i}=G_{[a_{i}, b_{i}]}\psi_{i}\\ 
% 		{[}a_{i}, b_{i}] & \text{if}\ \phi_{i}=F_{[a_{i}, b_{i}]}\psi_{i}\\ 
% 		{[}\underline{a}_{i}+\bar{a}_{i}, \underline{b}_{i} +  \bar{a}_{i}] & \text{if}\ \phi_{i}=F_{[\underline{a}_{i},\underline{b}_{i}]}G_{[\bar{a}_{i}, \bar{b}_{i}]}\psi_{i}\\  \end{cases}\tag{5}\\ 
% 	\rho_{i}^{\max}\in & \left(\max(0, \rho^{\psi_{i}}(\mathbf{x}_{{i}}(0))), \rho_{{i}}^{\text{opt}}\right)\tag{6}\\ 
% 	r_{i}\in & (0,\rho_{i}^{\max})\tag{7}\\ \gamma_{i}^{0}\in&\begin{cases} (\rho_{i}^{\max}-\rho^{\psi_{i}}(\mathbf{x}_{{i}}(0)),\infty)\ & \text{if}\ t_{i}^{\ast} > 0\\ (\rho_{i}^{\max}-\rho^{\psi_{i}}(\mathbf{x}_{{i}}(0)),\rho_{i}^{\max}-r_{i}]\ & \text{else}  \end{cases}\tag{8}\\ \gamma_{i}^{\infty}\in&\left.\left(0, \min(\gamma_{i}^{0}, \rho_{i}^{\max}-r_{i})\right]\right.\tag{9}\\ l_{i}\in&\begin{cases}\mathbb{R}_{\geq 0} & \text{if}- \gamma_{i}^{0}+\rho_{i}^{\max}\geq r_{i}\\   \label{funnelpara2}
% 	\frac{-\ln\left(\frac{r_{i}+\gamma_{i}^{\infty}-\rho_{i}^{\max}}{-\gamma_{i}^{0}-\gamma_{i}^{\infty}}\right)}{t_{i}^{\ast}}& \text{else} \end{cases}\tag{10}
% 	\end{align*}

\begin{align} \label{funnelpara1}
\small
 t_{i}^{\ast}\in & \begin{cases} a_{i} & \text{if}\ \phi_{i}=G_{[a_{i}, b_{i}]}\psi_{i}\\ 
		{[}a_{i}, b_{i}] & \text{if}\ \phi_{i}=F_{[a_{i}, b_{i}]}\psi_{i}\\ 
		{[}\underline{a}_{i}+\bar{a}_{i}, \underline{b}_{i} +  \bar{a}_{i}] \! & \text{if}\ \phi_{i}=F_{[\underline{a}_{i},\underline{b}_{i}]}G_{[\bar{a}_{i}, \bar{b}_{i}]}\psi_{i}  \end{cases} \\ 
	\rho_{i}^{\max}\in & \left(\max(0, \rho_i^{\psi_{i}}(\mathbf{x}_{{i}}(0))), \rho_{{i}}^{\text{opt}}\right) \\ 
	r_{i}\in & (0,\rho_{i}^{\max}) 
	\end{align} 
	\begin{align}
\small
	\gamma_{i}^{0}\in&\begin{cases} (\rho_{i}^{\max}-\rho_i^{\psi_{i}}(\mathbf{x}_{{i}}(0)),\infty)\ & \text{if}\ t_{i}^{\ast} > 0\\ (\rho_{i}^{\max}-\rho_i^{\psi_{i}}(\mathbf{x}_{{i}}(0)),\rho_{i}^{\max}-r_{i}]\ & \text{else}  \end{cases} \\ \gamma_{i}^{\infty}\in&\left.\left(0, \min(\gamma_{i}^{0}, \rho_{i}^{\max}-r_{i})\right]\right. \\ l_{i}\in&\begin{cases}\mathbb{R}_{\geq 0} & \text{if}- \gamma_{i}^{0}+\rho_{i}^{\max}\geq r_{i}\\   \label{funnelpara2}
	\frac{-\ln\left(\frac{r_{i}+\gamma_{i}^{\infty}-\rho_{i}^{\max}}{-\gamma_{i}^{0}-\gamma_{i}^{\infty}}\right)}{t_{i}^{\ast}}& \text{else}. \end{cases} 
\end{align} 	
With $\gamma_i$ and $\rho_{i}^{max}$ chosen properly (as shown above), one can achieve $0 \!<\! r_i \!\leq\! \rho_i^{\phi_i}(\mathbf{x}_i,0) \!\leq\! \rho_{i}^{max}$ by prescribing a temporal behavior to $\rho_i^{\psi_{i}}(\mathbf{x}_i(t))$ as in the set of guarantee $G_{i}$  in Theorem \ref{theorem}, i.e., $  -\gamma_i(t)\!+\!\rho_{i}^{max} \!<\! \rho_i^{\psi_{i}}(\mathbf{x}_i(t)) \!< \!\rho_{i}^{max}$ for all $t \!\geq\! 0$.

% \textcolor{blue}{Maybe we should another result to show completeness of trajectories. of the system.
% \begin{proposition}
% Consider a system $\Sigma$ consisting of interconnected compnents, if the initial condition in the set of guarantees and control law chosen as in the theorem then trajectories of $\Sigma$ are compelte
% \end{proposition}
% }
\vspace{-0.1cm}
\subsection{Global task satisfaction}

% \begin{lemma}\label{lemma1}
% 	(\textcolor{red}{To discuss}) The above assumption is satisfied when the set of guarantees $G_X$ is described as properties in the form of STL formulae/prescribed performance constraints,  $f$ in \eqref{eq:sys} is locally Lipschitz, $\mathbf{x} : \mathbb{R}_{\geq 0}  \rightarrow X$ is left continuous and the set of guarantees $G_X$ is \textcolor{red}{closed?}.
% \end{lemma}
In this subsection, we show that by applying the local controllers to the subsystems,
the global STL task for the network is also  satisfied based on our compositionality result.
%proposed in the last section, the global STL task for the network is satisfied by enforcing the derived local controllers on subsystems in a decentralized manner 

\begin{corollary}
 	Consider an interconnected control system  $\Sigma = \mathcal{I}(\Sigma_1,\dots,\Sigma_N)$ as  in Definition \ref{intersys}. 
	If we apply the controllers as   in \eqref{controller} to  subsystems $\Sigma_i$, then we get $\Sigma \models \C  = (\emptyset,\prod_{i\in I} G_{i})$. This means that the control objective in Problem~\ref{pro:1} is achieved, i.e.,  system $\Sigma$ satisfies signal temporal logic task $\phi$.
\end{corollary}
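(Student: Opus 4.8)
The plan is to read this corollary off Theorem~\ref{thm:main} by discharging its three hypotheses one at a time with the help of Theorem~\ref{theorem}, and then to convert the resulting weak contract satisfaction into satisfaction of the STL formula $\phi$. Throughout, I assume that each $\Sigma_i$ meets the standing hypotheses of Theorem~\ref{theorem} (Assumptions~\ref{assmprho1}--\ref{assmp:lipschitz} and the initial-condition bound), so that the controller~\eqref{controller} is well defined and applicable to every subsystem.

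First I would verify condition~(i) of Theorem~\ref{thm:main}, namely $\mathbf{x}_i(0)\in G_i(0)$ for every $i\in I$. This is immediate from the initial-condition requirement $-\gamma_i(0)+\rho_{i}^{max} < \rho_i^{\psi_i}(\mathbf{x}_i(0)) < \rho_{i}^{max}$ assumed in Theorem~\ref{theorem}, which by the definition of $G_i$ is exactly the statement $\mathbf{x}_i(0)\in G_i(0)$. Condition~(ii), $\Sigma_i\models_{us}\C_i$ for all $i$, is precisely the conclusion of Theorem~\ref{theorem}: applying~\eqref{controller} to each subsystem makes it uniformly strongly satisfy its local contract. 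Condition~(iii) is built into the contract assignment of Subsection~\ref{subsec:stl_to_agc}, where the assumption set is defined as $A_i=\prod_{j\in\n_i} G_j$, the product of the in-neighbours' guarantee sets, so that $\prod_{j\in\n_i} G_j\subseteq A_i$ holds with equality. With (i)--(iii) in place, Theorem~\ref{thm:main} yields $\Sigma\models\C=(\emptyset,\prod_{i\in I}G_i)$.

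It then remains to turn this into $\Sigma\models\phi$. Since the interconnected system has null internal input, weak satisfaction of $\C$ means that every trajectory $\mathbf{x}$ of $\Sigma$ stays in $G$ for all time, i.e. $\mathbf{x}_i(s)\in G_i(s)$, equivalently $-\gamma_i(s)+\rho_{i}^{max}<\rho_i^{\psi_i}(\mathbf{x}_i(s))<\rho_{i}^{max}$, for all $i\in I$ and all $s\ge 0$. Invoking the funnel-parameter design~\eqref{funnelpara1}--\eqref{funnelpara2}, confinement to this funnel forces $0<r_i\le \rho_i^{\phi_i}(\mathbf{x}_i,0)\le\rho_{i}^{max}$; in particular $\rho_i^{\phi_i}(\mathbf{x}_i,0)>0$, so $(\mathbf{x}_i,0)\models\phi_i$ by the robust-semantics property recalled in Subsection~\ref{STLsec}. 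As $\phi=\wedge_{i=1}^N\phi_i$, taking the conjunction over $i$ gives $(\mathbf{x},0)\models\phi$, which is the objective of Problem~\ref{pro:1}.

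The verifications (i)--(iii) amount to bookkeeping once Theorems~\ref{thm:main} and~\ref{theorem} are in hand, so the only genuinely content-bearing step is the final translation: certifying that permanent confinement to the atomic funnel $G_i$ actually enforces robust satisfaction of the temporal formula $\phi_i$. I expect this to be the main point requiring care, since it depends on the specific STL fragment through the choice of the evaluation instant $t_i^{\ast}$ and the decay rate $l_i$ in~\eqref{funnelpara1}--\eqref{funnelpara2}; the correspondence between the funnel bound on $\rho_i^{\psi_i}$ and the temporal robustness $\rho_i^{\phi_i}$ is exactly the construction of~\cite{larsCDC} that these parameters are designed to satisfy, and I would simply cite that design rather than re-derive it.
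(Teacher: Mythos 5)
Your proposal is correct and follows essentially the same route as the paper: discharge conditions (i)--(iii) of Theorem~\ref{thm:main} using the initial-condition bound, the conclusion of Theorem~\ref{theorem}, and the contract assignment $A_i=\prod_{j\in\n_i}G_j$, then invoke Theorem~\ref{thm:main}. Your final paragraph spelling out how confinement to $G_i$ together with the parameter design~\eqref{funnelpara1}--\eqref{funnelpara2} yields $\rho_i^{\phi_i}(\mathbf{x}_i,0)\geq r_i>0$ and hence $(\mathbf{x},0)\models\phi$ is actually more explicit than the paper's proof, which compresses this into a single closing sentence.
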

\begin{proof}
	From Theorem \ref{theorem}, one can verify that the closed-loop subsystems under controller \eqref{controller} satisfy: 
	for all $i \in I$, $\Sigma_i \models_{us} \C_i$, and  for all $i \in I$, $\prod_{j\in \n_i} G_{i} \subseteq A_{i}$.
	Moreover, for all $i \in I$ and for any trajectory $(\mathbf{x}_i,\mathbf{u}_i,\mathbf{w}_i):\mathbb{R}_{\geq 0} \rightarrow X_i\times U_i \times W_i$ of $\Sigma_i$, the choice of parameters of the funnel as in \eqref{funnelpara1}--\eqref{funnelpara2} ensures that  $\mathbf{x}_i(0) \in G_i(0)$.
	%By Lemma \ref{lemma1}, for all $i \in I$, $\Sigma_i$ satisfies Assumption \ref{assmp2}. 	
Hence, all conditions required in Theorem \ref{theorem} are satisfied, and thus,
we conclude that $\Sigma \models \C  = (\emptyset,\prod_{i\in I} G_{i})$ as a consequence of Theorem \ref{thm:main}. Therefore, the interconnection satisfies the STL task  $\phi=\land_{i=1}^N\phi_i$.
\end{proof}

\vspace{-0.15cm}
\section{Case Study}
%In this section, we demonstrate the effectiveness of our proposed results on two case studies. 

%\subsection{\bl{Room Temperature Regulation}} 

\begin{figure}[t]
	\centering
	\includegraphics[width=.3\textwidth]{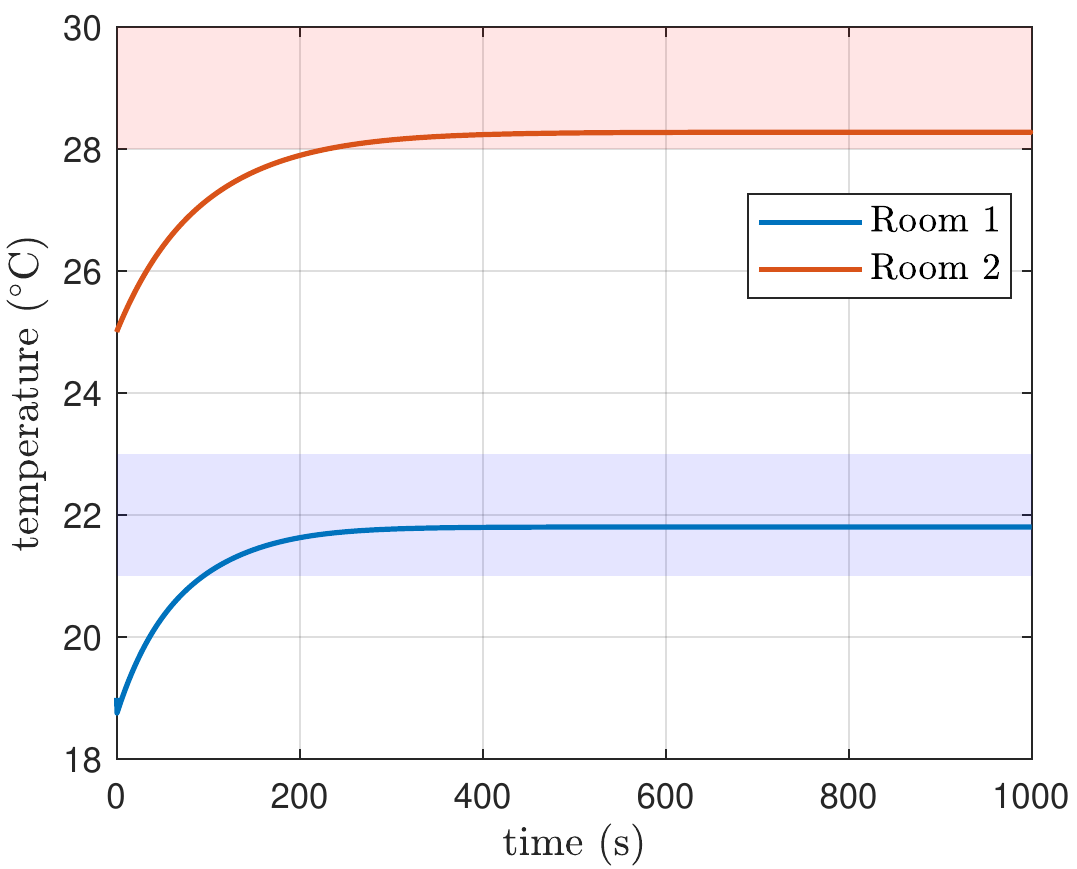}
 	\caption{Temperature evolution of the closed-loop subsystems $\Sigma_1$ and $\Sigma_2$ under control policy in \eqref{controller}.} \label{fig:room_temperature}
 	\vspace{-0.2cm}
\end{figure}

\begin{figure}[t]
	\vspace{+ 0.2cm}
	\centering
	\subcaptionbox{Funnel for $\Sigma_1$ with task $\psi_1$ \label{fig:room_rho2}}
	{\includegraphics[width=.238\textwidth]{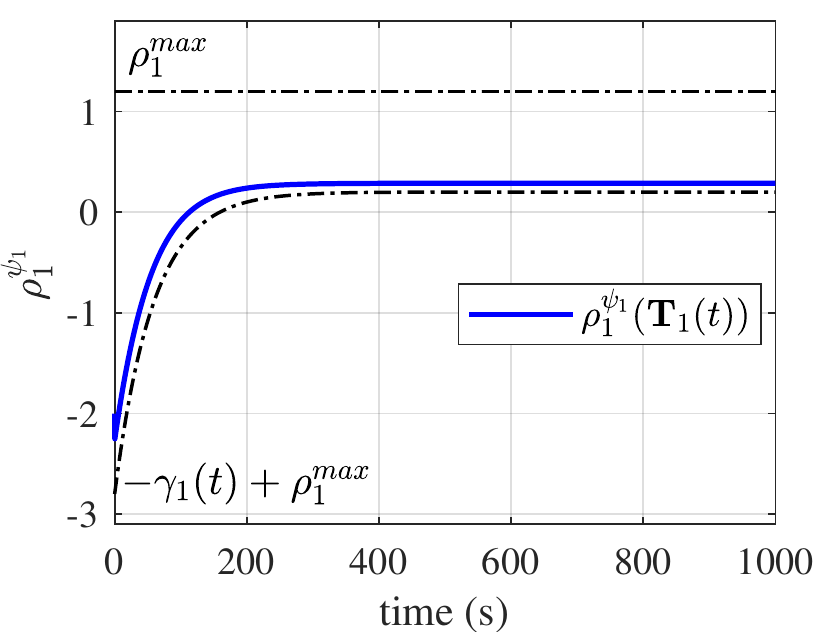}}
	\subcaptionbox{Funnel for $\Sigma_2$ with task $\psi_2$ \label{fig:room_rho4}}
	{\includegraphics[width=.238\textwidth]{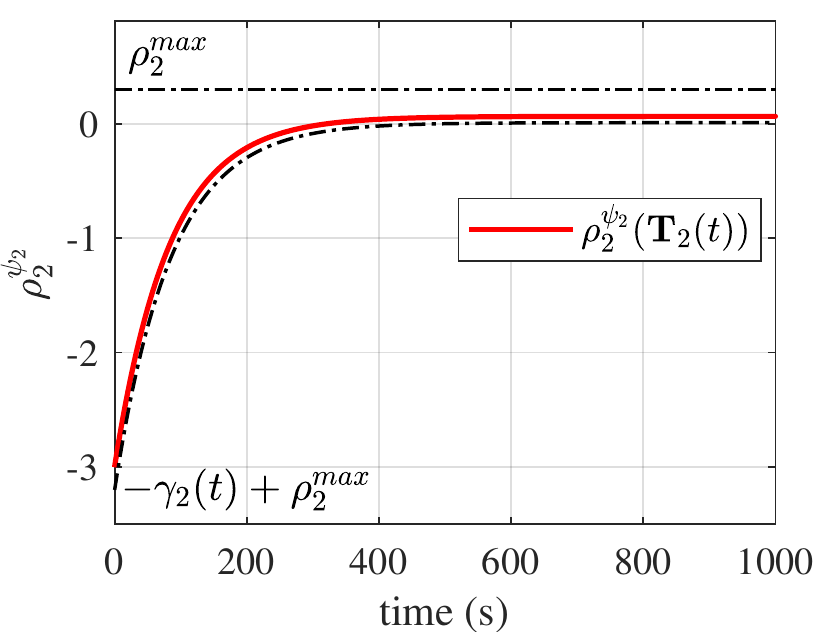}}
	\caption{Funnels for the local STL tasks for subsystems $\Sigma_1$ and $\Sigma_2$. Performance bounds are indicated by dashed lines. Evolution of $\rho_i^{\psi_{i}}(\mathbf{T}_i(t))$ are depicted using solid lines.}
	\label{fig:room}
	\vspace{-0.2cm}
\end{figure}

We demonstrate the effectiveness of the proposed results on two case studies: a room temperature regulation and a mobile robot control problem.

\subsection{Room Temperature Regulation} 

Here, we apply our results to the temperature regulation of a circular building with $N \geq 3$ rooms each equipped with a heater. 
 The evolution of the temperature of the interconnected model is described by the differential equation: 
\begin{align}\label{room}
\Sigma:\left\{
\begin{array}{rl}
\dot{\mathbf{T}}(t)=& A\mathbf{T}(t) + \alpha_h T_h \nu(t) + \alpha_e T_e,\\
\mathbf{y}(t)=&\mathbf{T}(t),
\end{array}
\right.
\end{align}
adapted from \cite{girard2015safety}, where $A \!\in\! \mathbb{R}^{N\!\times\! N}$ is a matrix with elements $\{A\}_{ii} \!=\! (- \!2 \alpha\!-\!\alpha_e\!-\!\alpha_h \nu_i)$, $\{A\}_{i,i+1} \!=\! \{A\}_{i+1,i} \! = \{A\}_{1,N} \!=\! \{A\}_{N,1} =\!\alpha$, $\forall i \in \{1, \dots, N-1\}$, and all other elements are identically zero, 
$\mathbf{T}(k)\!=\![\mathbf{T}_1(k);\dots; \mathbf{T}_N(k)]$,   $T_e\!=\![T_{e1};\dots;T_{eN}]$, $\nu(k)\!=\![\nu_1(k);\dots;\nu_N(k)]$, where $\nu_i(k)\!\in\! [0,1]$, $\forall i\!\in\!\{1, \dots, N\}$, represents the ratio of the heater valve being open in room $i$.
Parameters $\alpha\! =\! 0.05$, $\alpha_e\!=\!0.008$, and $\alpha_h\!=\! 0.0036$ are heat exchange coefficients, %are the conduction factors, respectively, between the neighboring rooms,  external environment and room $i$, and heater and room $i$. 
$T_{ei}\! =\! -1\,^\circ C$ is the external environment temperature, and  $T_h \!=\!50\,^\circ C$ is the heater temperature. 

Now, by introducing $\Sigma_i$ described by 
\begin{align*}\label{roomi}
\Sigma_i:\left\{
\begin{array}{rl}
\dot{\mathbf{T}}_i(t)=& a\mathbf{T}_i(t) + d \mathbf{w}_i(t)+ \alpha_h T_h \nu_i(t) + \alpha_e T_{ei},\\
\mathbf{y}_i(t)=&\mathbf{T}_i(t),
\end{array}
\right.
\end{align*}
where $a = - \!2 \alpha\!-\!\alpha_e\!-\!\alpha_h \nu_i$, $d = \alpha$, and $\mathbf{w}_i(t) = [\mathbf{y}_{i-1}(t);\mathbf{y}_{i+1}(t)]$ (with $\mathbf{y}_{0} = \mathbf{y}_{n}$ and $\mathbf{y}_{n+1} = \mathbf{y}_{1}$), one can readily verify that $\Sigma = \mathcal{I}(\Sigma_1,\dots,\Sigma_N)$ as in Definition \ref{intersys}. 
The initial temperatures of these rooms are, respectively, $\mathbf{T}_{i}(0)=19\,^\circ C$ if $i \in I_o = \{i \text{ is odd } | i \in \{1, \dots, N\} \}$, and $\mathbf{T}_{i}(0)=25\,^\circ C$ if $i \in I_e = \{i \text{ is even } | i \in \{1, \dots, N\} \}$. 
The room temperatures are subject to the following STL tasks 
$\phi_i$: $F_{[0,1000]}G_{[200,1000]}( \mathbf{T}_i\leq 25) \wedge(\mathbf{T}_i \geq 21)$, for $i \in I_o$, and $\phi_i$: $F_{[0,1000]}G_{[500,1000]}( \mathbf{T}_i\leq 30) \wedge(\mathbf{T}_i \geq 28)$, for $i \in I_e$. Intuitively, the STL tasks  $\phi_i$ requires that the controller (heater) should be synthesized such that the temperature of the first room reaches the specified region ($[21, 25]$ for odd-numbered rooms or $[28, 30]$ for the even-numbered room) and remains there in the desired time slots. 

Next, we apply the proposed funnel-based feedback controllers as in \eqref{controller} to enforce the STL tasks on consisting of $N = 1000$ rooms. 
Numerical implementations were performed using MATLAB on a computer with a processor Intel Core i7 3.6 GHz CPU. 
Note that the computation of local controllers took on average 0.01 ms, which is negligible. The computation cost is very cheap since the local controller $\boldsymbol{u}_{i}$ is given by a closed-form expression and computed individually for the subsystems only. 
The simulation results for subsystems $\Sigma_1$ and $\Sigma_2$ are shown in Figs.~\ref{fig:room_temperature} and \ref{fig:room}. The state trajectories of the closed-loop subsystems are depicted as in Fig.~\ref{fig:room_temperature}.  The shaded areas represent the desired temperature regions to be reached by the systems.  
In Fig.~\ref{fig:room}, we present the temporal behaviors of $\rho_i^{\psi_{i}}(\mathbf{T}_1(t))$ for the two rooms $\Sigma_1$ and $\Sigma_2$.
It can be readily seen that the prescribed performances of   $\rho_i^{\psi_{i}}(\mathbf{T}_i(t))$ are satisfied with respect to the error funnels, which shows that the time bounds are also respected.
Remark that the design parameters of the funnels are chosen according to the instructions listed in \eqref{funnelpara1}-\eqref{funnelpara2}, which guarantees the satisfaction of temporal formulae $ \rho_i^{\phi_i}(\mathbf{T}_i,0)$ by prescribing temporal behaviors of atomic formulae $\rho_i^{\psi_{i}}(\mathbf{T}_i(t))$ as in Fig.~\ref{fig:room}. 
We can conclude that all STL tasks are satisfied within the desired time interval. 

\subsection{Mobile Robot Control}

In this subsection, we demonstrate the effectiveness of the proposed results on a network of $N = 5$ mobile robots adapted from \cite{liu2008omni} with induced dynamical couplings.
Each mobile robot has three omni-directional wheels.
%Let $x_{i, j}$ with $j\in\{1,2,3\}$ denote the j-th element of subsystem $\Sigma_{i}$'s state and let $\boldsymbol{p}_{i}:=[x_{i, 1}; x_{i, 2}]$.[\boldsymbol{p}_{i}; x_{i, 3}]  
The dynamics of each robot $\Sigma_i$, $i\in\{1,2,\ldots,5\}$ can be described by 
\begin{equation*}
	\small
	\dot{\boldsymbol{x}}_{{i}} \!=\!\begin{bmatrix}
		\cos (x_{i, 3}) \! &\! -\sin (x_{i, 3}) \!&\! 0\\
		\sin (x_{i, 3}) \!&\! \cos (x_{i, 3}) \!&\! 0\\
		0 \!&\! 0 \!&\! 1
	\end{bmatrix}\left(B_{i}^{\top}\right)^{-1}\!\!R_{i}\boldsymbol{u}_{i} \!-\!\! \sum_{j\in \n_i }k_i(\boldsymbol{x}_{i} \!-\!  \boldsymbol{x}_{j}),
%	\boldsymbol{w}_i,
\end{equation*}
where the state variable of each robot is defined as $\boldsymbol{x}_{i}:=[x_{i, 1}; x_{i, 2};x_{i, 3}]$ with two states $x_{i, 1}$ and $x_{i, 2}$ indicating the robot position and  state $x_{i, 3}$ indicating the robot orientation with respect to the $x_{i, 1}$-axis; $R_{i}: = 0.02$ m is the wheel radius of each robot; $B_{i}: = \begin{bmatrix}0 & \cos(\pi/6) & -\cos(\pi/6)\\ -1 & \sin(\pi/6) & -\sin(\pi/6)\\ L_{i} & L_{i} & L_{i}\end{bmatrix}$ describes geometrical constraints with $L_{i}:=0.2$ m being the radius of the robot body. Each element of the input vector $\boldsymbol{u}_{i}$ corresponds to the angular rate of one wheel. 
Note that $\sum_{j\in \n_i }k_i(\boldsymbol{x}_{i} - \boldsymbol{x}_{j})$ represents the dynamical coupling between subsystems induced by an implemented consensus protocol, where $\boldsymbol{x}_{j}$, $j \in \n_i$, are the states of the neighboring subsystems of $\Sigma_i$, and $k_i = 0.1$. 
Specifically, we have $\n_i = \{i + 1\}$ for subsystems $\Sigma_i$, $i \in \{1,2,3,4\}$, %\textcolor{red}{this notation with semicolon is not defined... let's keep it simple as $i \in \{1,2,3,4\}$, check other places if any!},
and $\n_5 = \{1\}$. 
The initial states of the robots are, respectively, $\boldsymbol{x}_{1}(0)=[0.1; 0.6; \pi/4]$, $\boldsymbol{x}_{2}(0)=[0.4; 1.1; -\pi/4]$, $\boldsymbol{x}_{3}(0)=[1.05; 0.8; -\pi/4]$, $\boldsymbol{x}_{4}(0)=[1; 0.2; \pi/4]$,
$\boldsymbol{x}_{5}(0)=[0.3; 0.1; 0]$.

\begin{figure}[t]

	\centering
	\includegraphics[width=.35\textwidth]{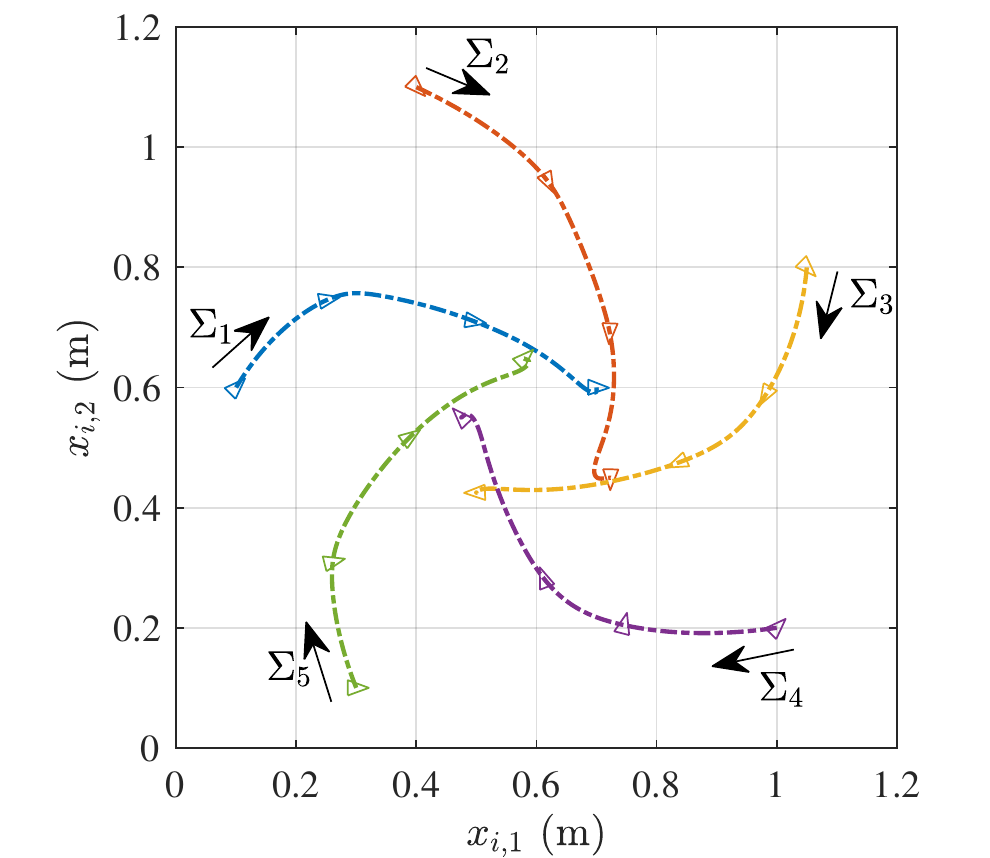}
 	\caption{State trajectories of the closed-loop robot systems on the position plane. The triangles indicate the orientation of each robot.} \label{fig:robot_position}
 	\vspace{-0.5cm}
\end{figure}

\begin{figure}[t!]
	\vspace{+ 0.2cm}
	\centering
	\subcaptionbox{Funnel for $\Sigma_2$ with task $\psi_2$ \label{fig:robot_rho2}}
	{\includegraphics[width=.238\textwidth]{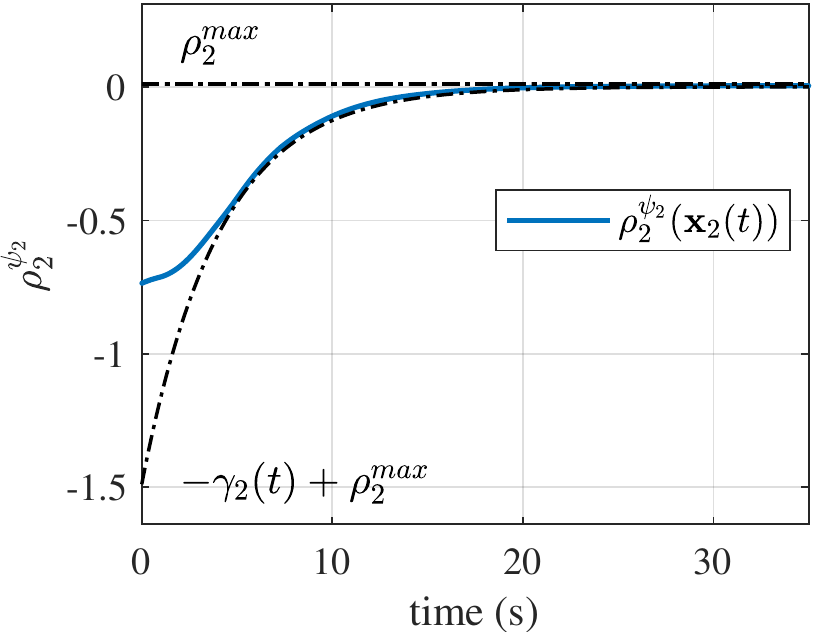}}
	\subcaptionbox{Funnel for $\Sigma_4$ with task $\psi_4$ \label{fig:robot_rho4}}
	{\includegraphics[width=.238\textwidth]{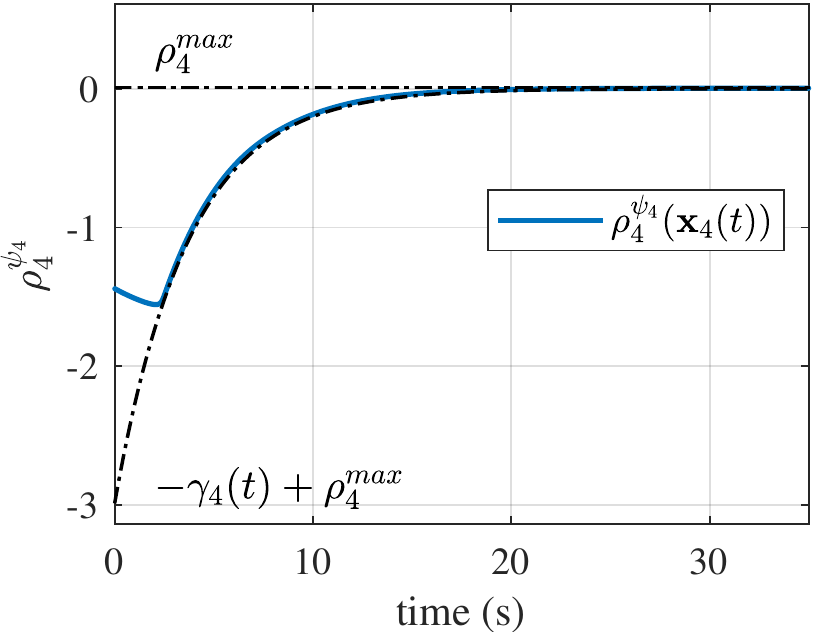}}
	%\quad
	
% 	\bigskip
% 	\subcaptionbox{Funnel for $\Sigma_3$ with task $\psi_3$ \label{fig:robot_rho3}}
% 	{\includegraphics[width=.235\textwidth]{fig/robot_rho3}}
% 	\subcaptionbox{Funnel for $\Sigma_4$ with task $\psi_4$ \label{fig:robot_rho4}}
% 	{\includegraphics[width=.235\textwidth]{fig/robot_rho4}}
	
% 	\bigskip
% 	\subcaptionbox{Funnel for $\Sigma_5$ with task $\psi_5$ \label{fig:robot_rho5}}
% 	{\includegraphics[width=.235\textwidth]{fig/robot_rho5}}
	%\quad
	\caption{Funnels for the local STL tasks. Performance bounds are indicated by dashed lines. Evolution of $\rho_i^{\psi_{i}}(\boldsymbol{x}_i(t))$ are depicted using solid lines.}
	\label{fig:robot}
	\vspace{-0.7cm}
\end{figure}

Let $\boldsymbol{p}_{i}:=[x_{i, 1}; x_{i, 2}]$ denote the position of of each robot $\Sigma_i$. 
The robots are subject to the following STL tasks:
 $\phi_1$: $F_{[0,35]}G_{[30,35]}((\Vert \boldsymbol{p}_{1}-[0.7;0.6]\Vert \leq 0.05) \wedge(\vert \deg(x_{1,3}) - 0 \vert \leq 7.5)$, 
$\phi_2$: $F_{[0,35]}G_{[30,35]}((\Vert \boldsymbol{p}_{2}-[0.725; 0.45]\Vert \leq 0.5) \wedge(\vert \deg(x_{2,3}) + 90 \vert \leq 7.5)$, $\phi_3$: $F_{[0,35]}G_{[30,35]}((\Vert \boldsymbol{p}_{3}-[0.5;0.425]\Vert \leq 0.5) \wedge(\vert \deg(x_{1,3}) + 180 \vert \leq 7.5)$, 
$\phi_4$: $F_{[0,35]}G_{[30,35]}((\Vert \boldsymbol{p}_{4}-[0.475; 0.55]\Vert \leq 0.5) \wedge(\vert \deg(x_{2,3}) - 145\vert \leq 7.5)$, $\phi_5$: $F_{[0,35]}G_{[30,35]}((\Vert \boldsymbol{p}_{5}-[0.575; 0.65]\Vert \leq 0.5) \wedge(\vert \deg(x_{2,3}) - 45\vert \leq 7.5)$, %\textcolor{red}{why everywhere $p_1$ in the specification?} 
where $\deg(\cdot)$ converts angle units from radians to degrees.
Intuitively, each robot is assigned to move to its predefined goal point and stay there within the desired time interval, in the meanwhile satisfying the additional requirements on the robots' orientation. 
Next, we apply the proposed funnel-based feedback controllers as in \eqref{controller} to enforce the STL tasks on the 5-robot network. 
Numerical implementations were performed using MATLAB on a computer with a processor Intel Core i7 3.6 GHz CPU. Note that the computation of local controllers took on average 0.01 ms, which is negligible since $\boldsymbol{u}_{i}$ is given by a closed-form expression. Simulation results are shown in Figs.~\ref{fig:robot_position} and \ref{fig:robot}. The state trajectories of each robot are depicted as in Fig.~\ref{fig:robot_position} on the position plane. The triangles are used to indicate the dynamical evolution of the orientation of each robot.  
In Fig.~\ref{fig:robot}, we present the temporal behaviors of $\rho_i^{\psi_{i}}(\mathbf{x}_i(t))$ for two robots $\Sigma_2$ and $\Sigma_4$.
It can be readily seen that the prescribed performances of   $\rho_i^{\psi_{i}}(\mathbf{x}_i(t))$ are satisfied with respect to the error funnels, which shows that the time bounds are also respected.
Note that temporal behaviors of $\rho_i^{\psi_{i}}(\mathbf{x}_i(t))$, $i \in \{1,3,5\}$, also satisfy their prescribed performance bounds, although the figures are omitted here due to lack of space. 
% with respect to the error funnels. 
% evolution of the error funnels with respect to time is presented for each robot 
Remark that the design parameters of the funnels are chosen according to the instructions listed in \eqref{funnelpara1}-\eqref{funnelpara2}, which guarantees the satisfaction of temporal formulae $ \rho_i^{\phi_i}(\mathbf{x}_i,0)$ by prescribing temporal behaviors of atomic formulae $\rho_i^{\psi_{i}}(\mathbf{x}_i(t))$, as shown in Fig.~\ref{fig:robot}. 
We can conclude that all STL tasks are satisfied within the desired time interval.

\vspace{-0.15cm}

\section{Conclusions}
We proposed a compositional approach for the synthesis of a fragment of STL tasks for continuous-time interconnected systems using assume-guarantee contracts.
A new concept of contract satisfaction, i.e., uniform strong satisfaction, was introduced to establish our contract-based compositionality result. A continuous-time feedback controller was designed to enforce the uniform strong satisfaction of local contracts by all  subsystems, while guaranteeing the satisfaction of global STL for the interconnected system based on the proposed compositionality result.

% \begin{figure}[t!]
% 	\vspace{+ 0.2cm}
% 	\centering
% 	\subcaptionbox{Funnel for $\Sigma_2$ with task $\psi_2$ \label{fig:robot_rho2}}
% 	{\includegraphics[width=.238\textwidth]{fig/robot_rho2}}
% 	\subcaptionbox{Funnel for $\Sigma_4$ with task $\psi_4$ \label{fig:robot_rho4}}
% 	{\includegraphics[width=.238\textwidth]{fig/robot_rho4}}
	%\quad
	
% 	\bigskip
% 	\subcaptionbox{Funnel for $\Sigma_3$ with task $\psi_3$ \label{fig:robot_rho3}}
% 	{\includegraphics[width=.235\textwidth]{fig/robot_rho3}}
% 	\subcaptionbox{Funnel for $\Sigma_4$ with task $\psi_4$ \label{fig:robot_rho4}}
% 	{\includegraphics[width=.235\textwidth]{fig/robot_rho4}}
	
% 	\bigskip
% 	\subcaptionbox{Funnel for $\Sigma_5$ with task $\psi_5$ \label{fig:robot_rho5}}
% 	{\includegraphics[width=.235\textwidth]{fig/robot_rho5}}
	%\quad
% 	\caption{Funnels for the local STL tasks. Performance bounds are indicated by dashed lines. Evolution of $\rho_i^{\psi_{i}}(\boldsymbol{x}_i(t))$ are depicted using solid lines.}
% 	\label{fig:robot}
% 	\vspace{-0.7cm}
% \end{figure}

%\vspace{-0.25cm}
%\newpage
%%%%%%%%%%%%%%%%%%%%%%%%%%%%%%%%%%%%%%%%%%%%%%%%%%%%%%%%
% ---- Bibliography ----
\vspace{-0.15cm}
\bibliographystyle{IEEEtran}
\bibliography{biblio}

\end{document}